\documentclass[12pt]{amsart}
\usepackage[margin=1in]{geometry}
\usepackage{amssymb,amsthm, times}
\usepackage{delarray,verbatim,bbm,enumerate}



\usepackage{ifpdf}
\usepackage{color}
\definecolor{webgreen}{rgb}{0,.5,0}
\definecolor{webbrown}{rgb}{.8,0,0}
\definecolor{emphcolor}{rgb}{0.95,0.95,0.95}

\usepackage{hyperref}
\hypersetup{%
          colorlinks=true,
          linkcolor=webbrown,
          filecolor=webbrown,
          citecolor=webgreen,
          breaklinks=true}
\ifpdf \hypersetup{pdftex,
            pdfstartview=FitH, 
            bookmarksopen=true,
            bookmarksnumbered=true
} \else \hypersetup{dvips} \fi

\DeclareMathAlphabet{\mathpzc}{OT1}{pzc}{m}{it}

\linespread{1.2}

\DeclareMathOperator{\hol}{C}

\numberwithin{equation}{section}
\DeclareMathOperator{\expo}{e}

\theoremstyle{plain}
  \newtheorem{theorem}{Theorem}[section]
  
  \newtheorem{proposition}[theorem]{Proposition}
  \newtheorem{lemma}[theorem]{Lemma}
  
\theoremstyle{remark}
  \newtheorem{remark}[theorem]{Remark}

\newcommand {\R}{\mathbb{R}}

\newcommand {\E}{\mathbb{E}}

\newcommand{\diff}{{\rm d}}
\newcommand{\sell}{{\rm s}}
\newcommand{\barrier}{{\rm b}}
\newcommand{\cont}{{\rm c}}

\newcommand{\blue}{\textcolor[rgb]{0.00,0.00,1.00}}

\newcommand{\eqdef}{\raisebox{0.4pt}{\ensuremath{:}}\hspace*{-1mm}=}
\newcommand{\defeq}{=\hspace*{-1mm}\raisebox{0.4pt}{\ensuremath{:}}}

\title{Periodic strategies in optimal execution with multiplicative price impact}

\author[D. Hern\'andez]{Daniel Hern\'andez-Hern\'andez$^*$}
\thanks{$*$\, Department of Probability and Statistics, Centro de Investigaci\'on en Matem\'aticas A.C. Calle Jalisco s/n. C.P. 36240, Guanajuato, Mexico. Email: dher@cimat.mx.  }
\author[H. A. Moreno]{Harold A. Moreno-Franco$\ddag$}
\thanks{$\ddag$\, National Research University Higher School of Economics, Shabolovkaya 31, building G, C.P. 115162,  Moscow, Russia. Email: hmoreno@hse.ru.   }
\author[J. L. P\'erez]{Jos\'e-Luis P\'erez$^\dagger$}
\thanks{$\dagger$\, Department of Probability and Statistics, Centro de Investigaci\'on en Matem\'aticas A.C. Calle Jalisco s/n. C.P. 36240, Guanajuato, Mexico. Email: jluis.garmendia@cimat.mx.  }

\begin{document}
	
\maketitle
\begin{abstract}
 In this work we study the optimal execution problem with multiplicative price impact in algorithm trading, when an agent holds an initial position of shares of a financial asset. The  inter-selling-decision times are modelled by the arrival times of a Poisson process.   The criterion to be optimised consists in  maximising the expected net present value of gains of the agent, and it is proved that an optimal  strategy  has a barrier form, depending only on the number of shares left and the level of    asset  price. 
\end{abstract}

\section{Introduction}
In this work we are interested in the problem of finding optimal execution strategies for a financial market impact model where transactions can have a permanent effect. The analysis of this problem has practical and mathematical motivations, and has been studied from different perspectives. Nowadays  the use of algorithmic trading to execute large book  orders has given rise to important questions on the best way  to execute the position, in order to decrease the negative  effect on the shift of asset  price, and also obtain the better performance of the criteria to be  optimised. In general, existence of optimal strategies can not be guaranteed, and clearly depend on the structure of the market model as well as on the parameters involved in its description. 

In any market impact model, it is crucial to describe the way that order execution algorithms will be generated. Despite the analytical tractability of the classical continuous time models, these are unfortunately not implementable in practice. On the other hand, while the models with discrete execution decision times are ideal, they lack analytical tractability, and numerical methods are required to solve them. Recently, with the aim of developing a more realistic yet analytically tractable model, random discrete execution times have been considered. Random observations are suggested in various economic literatures. See, for example, the discussion in the introduction of \cite{SS} motivated by \textit{rational inattention} (see \cite{Sims}) in macroeconomics literature. See also the discussions given in \cite{BL,LE}    for real option problems with random intervention times, and \cite{DW, GL} for  applications in optimal stopping problems and Bermudan look-back option pricing. On this regard, an important motivation for considering the Poissonian interarrival model is its potential applications in approximating the constant interarrival time cases.  It is known, in the mathematical finance literature, that randomization techniques (see, e.g.\ \cite{Carr}) are efficient in approximating constant maturity problems with those with Erlang-distributed maturities. In particular, for short maturity cases, it is known empirically that accurate approximations can be obtained by simply replacing the constant with an exponential random variable \cite{Le}.

In this note we propose a random clock, attached to the jumps of a Poisson process, for the times when the execution decisions will  take place. This is a new instrument that may represent advantages from the implementation perspective, since the randomness provided by the random clock  included in the execution strategy  introduce an additional  unpredictable   structure to the strategies. The empirical justification of this model can be approached from the following perspective, related to market micro-structural factors. It is well known that the dynamics and the volume of the trades interact with the evolution of the market liquidity of an asset. The design of each portfolio is based on the information arriving from the order flow of buys and sells decisions of other investors, but not on who is behind each decision.  This suggest that a strategic sequential trading to execute a large book order  is ``event based", represented by the new information inferred regarding the value of the asset from the composition and existence of trades from the market participants.  Thus, we are proposing that the arrival process is linked with market parameters, like liquidity, volume, market depth and order flows.  Interestingly, this view point reflects the fact, well understood in high frequency trading for instance, that time has a different meaning when we are operating an algorithmic trading strategy using cycles depending on the amount of information received, instead of the measurements based on chronological time.   

The benchmark models assume that either  trading can be done in continuous time or there are constant intervals of time at which the portfolio is balanced.   Neither of them has  practical  reasons to be sustainable, besides their analytical tractability, since investors continuously gain information about the trading environment. In the model proposed in this paper we are allowing a random clock of time in  which new information is processed,  based on the evolution of  the main factors of the market.  This is a good example of algorithmic trading which does not necessarily occur at a high frequency rate, but leaving this possibility to execute the position with frequent sells when the parameter of the Poisson process is manipulated to do so. Of course, it also helps to include asynchronous transactions to hide as much as possible the strategy followed.


 In this note  we are assuming that the agent holds a large position and, as typically happen, we expect that any selling strategy will lead to decrease of prices.   When the agent is not active the model adopted is a standard geometric Brownian motion with drift, following the work by Guo and Zervos \cite{ZG15}. Another important element in the model is related with the manner to quantify the revenues received by each selling strategy. On this regard, the criterion will be calculated as the net present value of the difference between the gains of the selling strategy and the transaction costs associated. 

In contrast with the multiplicative impact model presented here, in the seminal papers of Almgren and Chriss \cite{A-C-2}, \cite{A-C-3} and Almgren \cite{A}, the execution strategies are assumed to be absolutely continuous functions of time, having a price impact acting in an additive way; Bertsimas and Lo \cite{BL} also made fundamental contributions considering a discrete random walk model. In our case the strategies are described as Lebegue-Stieltjes integrals with respect to the paths of the Poisson process,  named {\it periodic strategies}, in analogy with the terminology used in insurance models when dividend payment decisions are taken; see,  for instance, \cite{ACW, ATW, J-X, P-X}.

More recent contributions to the theory of optimal execution  found in the literature include Huberman and Stanzl \cite{HS}, He and Mamaysky \cite{HM},
Gatheral, Schied, and Slynko \cite{GSS}, Obizhaeva and Wang \cite{OW}, Almgren and Lorenz \cite{A-L}, Engle
and Ferstenberg \cite{EF}, Schied and Sch\"oneborn \cite{SS}, Alfonsi, Fruth, and Schied \cite{AFS} \cite{AFS2}, Schied,
Sch\"oneborn, and Tehranchi \cite{SST}, Predoiu, Shaikhet, and Shreve \cite{PSS}, and Lokka \cite{L}.

In order to find an optimal strategy over the set of  periodic strategies,  
we   restrict our analysis to the set of  periodic barrier strategies. This class of barrier strategies are very easy to be implemented, since    selling decisions  are taken observing if the price of the stock lies above a certain fixed level $F$ and the remaining number of shares. Then, the  first step  consists in finding the optimal barrier strategy that maximises the performance criteria. This is done by solving the Hamilton-Jacobi-Bellman (HJB) equation associated to this problem, which allows us to obtain an explicit form of the value function for this restricted problem. Imposing a suitable smoothness   condition on the value function  we obtain the explicit value of the barrier $F_{\gamma}$ associated to the optimal strategy.  
This  strategy can be described  as follows: If the stock price is below a critical level $F_{\gamma}$ at a selling time, then it is optimal not to sell any shares. However, if the  stock price lies above the level $F_{\gamma}$ when the random clock rings,  it is optimal either to sell all available shares or liquidate a fraction of the position that will have as a consequence that  the stock price decreases.  A Verification Lemma is used to proved that the original optimisation problem within the periodic strategies can be solved implementing only barrier strategies.

The rest of the paper is organised as follows. In Section \ref{section_preliminaries}, we review the underlying model for the stock price with multiplicative price impact and provide the performance criterion, as well as the formulation of the optimal execution problem with periodic strategies.  In Section  \ref{PR}, we obtain an explicit form for the solution of the HJB equation associated to the value function over the set of periodic barrier strategies. A Verification Theorem  is provided (Theorem \ref{promax1}),  showing an explicit form for the optimal (or $\varepsilon$-optimal) periodic strategy, under appropriate conditions on the parameters of the model.   Finally, we defer the proofs of some technical lemmas to the Appendix.

\section{Market impact model}\label{section_preliminaries}
In this section we describe the optimal execution model, based on the paper by Guo and Zervos \cite{ZG15}. Let us fix a filtered probability space $(\Omega,\mathcal{F},\mathcal{F}_t,\mathcal{P})$ satisfying the usual conditions and carrying a standard $(\mathcal{F}_t)$-Brownian motion $W$ and an independent  Poisson process $N^{\gamma}$. We consider an agent holding an initial position of  $y$ shares of a financial asset, which has to be sold maximising the expected gains. The information available to the agent is enclosed in the filtration $\mathcal{F}_t$.

The trading strategies  are  denoted by the duple $(\xi_{t}^{\sell},\xi_{t}^{\barrier})$, which represents   the total number of shares that the investor has sold and bought  up  to  time $t$,  respectively. Then, the total number of shares held by the agent  at time $t$ are given by
\begin{equation}\label{sell_cond}
Y_t\eqdef y-\xi_{t}^{\sell} +\xi_{t}^{\barrier}\geq0,\qquad\text{for $t\geq0$,}
\end{equation}
where, $\xi^{\sell}$, $\xi^{\barrier}$  are  $(\mathcal{F}_t$)-adapted increasing c\`agl\`ad processes such that 
\begin{equation}\label{s.1}
\xi_{0}^{\sell}= \xi_{0}^{\barrier}=0,\quad \E\big[\expo^{4\lambda\xi_{t+}^{\barrier}}\big]<\infty\quad \text{and}\quad \displaystyle\lim_{t\to\infty}Y_{t}=0.   
\end{equation}
Although not reflected in $(\xi^{\sell},\xi^{\barrier})$, there is the restriction that   the agent cannot sell and buy shares at the same time.  The set of admissible strategies satisfying the previous conditions is denoted by $\Xi(y)$.

The stock price  observed by the agent, independently of the actions of other market participants, is modelled by the geometric Brownian motion $X^0$ with drift
\begin{equation}\label{GBM}
X^0_t=\mu X^0_t \diff t+\sigma X^0_t \diff W_t,\;\;\;X^0_0=x>0,
\end{equation}
where $\sigma\in\R$ and $\mu>0$ are constants. Let us suppose that the agent is implementing a strategy $(\xi^{\sell},\xi^{\barrier})\in \Xi(y)$. Hence, when the agent decides to sell or buy some  number of shares of the asset at time $t$, we assume that there is an impact in the price, described  as a multiplicative factor, namely, the resulting price $X_t$ is assumed to have the form
\begin{equation}\label{priceP}
X_t=X^0_t\exp\{-\lambda (\xi^{\sell}_t -\xi_{t}^{\barrier})\},
\end{equation}
for some positive constant $\lambda$ describing the permanent impact on the price, and $X^0_t$ is the solution to (\ref{GBM}). More specifically,  following  \cite{ZG15}, the controlled process dynamics  can be described  as the solution of the following stochastic differential equation
\begin{align}\label{sde1}
	dX_t=\mu X_t\diff t+\sigma X_t\diff W_t -\lambda(X_t\circ_{\sell}\diff\xi_{t}^{\sell}- X_t\circ_{\barrier}\diff\xi_{t}^{\barrier}),
\end{align}
where
\begin{equation}\label{operator}
\begin{split}
X_t\circ_{\sell}\diff\xi_{t}^{\sell}&=X_t\diff (\xi^{\sell})_t ^{\cont}+\dfrac{1}{\lambda} X_{t}(1-\expo^{-\lambda\Delta\xi_{t}^{\sell}})=X_t\diff (\xi^{\sell})_t^{\cont}+X_{t}\int_0^{\Delta\xi_{t}^{\sell}}\expo^{-\lambda u}\diff u,\\
X_t\circ_{\barrier}\diff\xi_{t}^{\barrier}&=X_t\diff (\xi^{\barrier})_t ^{\cont}+\dfrac{1}{\lambda}X_{t}(\expo^{\lambda\Delta\xi_{t}^{\barrier}}-1)=X_t\diff (\xi^{\barrier})_t^{\cont}+X_{t}\int_0^{\Delta\xi_{t}^{\barrier}}\expo^{\lambda u}\diff u,
\end{split}
\end{equation}
and the processes $(\xi^{\sell})^{\cont}$  and $(\xi^{\barrier})^{\cont}$  are the continuous part of $\xi^{\sell}$ and $\xi^{\barrier}$, respectively. The pair $(X_t, Y_t)$ is referred as the state  process associated to the strategy $(\xi^{\sell},\xi^{\barrier})$.

One of the main differences between the model introduced by Guo and Zervos \cite{ZG15} with the approach presented in this paper consists in presenting a different framework to execute the initial position $y$. While  Guo and Zervos assume that at each time $t\geq 0$ the agent should decide the number of shares to sell or buy, in this paper we assume that selling or buying can only occur at some (typically random) time points, modelled by the  jump times of an independent Poisson process $(N^{\gamma}_t\;:\;t\geq 0)$, with rate $\gamma>0$. More precisely, the selling and buying strategies   are given by
\begin{equation}\label{control}
\xi^{\sell}_t=\int_0^t\nu^{\sell}_s\diff N^{\gamma}_s\quad \text{and}\quad\xi^{\barrier}_t=\int_0^t\nu^{\barrier}_s\diff N^{\gamma}_s,
\end{equation}
where $\nu^{\sell}_t,\ \nu^{\barrier}_t$ are $\mathcal{F}_t$ adapted processes, representing  the number of shares  sold and bought at time $t$, respectively. Since the agent cannot sell and buy at the same time, the following condition holds
\begin{equation*}
\begin{cases}
\nu^{\sell}_{t}=0,&\text{if}\ \nu^{\barrier}_{t}>0,\\
\nu^{\barrier}_{t}=0,&\text{if}\ \nu^{\sell}_{t}>0,\\
\nu^{\sell}_{t}=\nu^{\barrier}_{t}=0,&\text{otherwise}.
\end{cases}
\end{equation*}	
Within this context,  selling-buying shares are necessarily done at  discrete  periods of time (there cannot be continuous selling and buying)  since selling-buying decisions can only occur when the process $N^{\gamma}$ has jumps. The set of selling-buying  decision times is denoted as $\mathcal{T}=\{T_1, T_2,\dots\}$, and the quantities $T_k-T_{k-1}$, $k\geq 0$, are the inter-selling-buying-decision times, which are exponentially distributed with mean $1/\gamma$. The number of shares sold and bought at each decision time $T_j$ are denoted by $\nu^{\sell}_{T_j}$ and $\nu^{\barrier}_{T_j}$ ,  respectively, with $\Theta^{\sell,\barrier}=\{(\nu^{\sell}_{T_1},\nu^{\sell}_{T_2},\dots) ,(\nu^{\barrier}_{T_1},\nu^{\barrier}_{T_2},\dots) \}$ representing a selling-buying  strategy via (\ref{control}); the subset of strategies $(\xi^{\sell}, \xi^{\barrier})\in \Xi(y)$ which can be represented as in  (\ref{control}) is denoted by $\mathcal{A}(y)$. For those strategies $(\xi^{\sell}, \xi^{\barrier}) \in \mathcal{A}(y)$, the operator defined in (\ref{operator}) can be written as
\begin{equation}\label{pc.1}
\begin{split}
X_t\circ_{\sell}\diff\xi_{t}^{\sell}&=\dfrac{1}{\lambda}X_{t}(1-\expo^{-\lambda\nu^{\sell}_{t}})\diff N_{t}^{\gamma}=X_{t}\bigg(\int_0^{\nu^{\sell}_{t}}\expo^{-\lambda u}\diff u\bigg) \diff N_{t}^{\gamma},\\
X_t\circ_{\barrier}\diff\xi_{t}^{\barrier}&=\dfrac{1}{\lambda}X_{t}(\expo^{\lambda\nu_{t}^{\barrier}}-1)\diff N_{t}^{\gamma}=X_{t}\bigg(\int_0^{\nu_{t}^{\barrier}}\expo^{\lambda u}\diff u\bigg)\diff N_{t}^{\gamma}.
\end{split}
\end{equation}

Let $C_{\sell}, C_{\barrier}$ be positive constants representing the transaction cost associated with the sell and buy of shares, respectively. Then, the gains associated with each strategy $(\xi^{\sell}, \xi^{\barrier})\in \mathcal{A}(y)$ is given by
$$
\int_0^\infty  \left(X_t\circ_{\sell}\diff\xi^{\sell}_t -X_t\circ_{\barrier}\diff\xi^{\barrier}_t-C_{\sell}\diff\xi_{t}^{\sell} -C_{\barrier}\diff\xi_{t}^{\barrier}\right),
$$
and the agent's  objective is to maximise the expected  net present value of gains
\begin{equation}\label{pc}
J_{x,y}(\xi^{\sell}, \xi^{\barrier})\eqdef\E_{x}\biggr[\int_{0}^{\infty}\expo^{-\delta t}\left(X_t\circ_{\sell}\diff\xi^{\sell}_t-X_t\circ_{\barrier}\diff\xi^{\barrier}_t-C_{\sell}\diff\xi_{t}^{\sell}-C_{\barrier}\diff\xi_{t}^{\barrier}\right)\biggl],
\end{equation}
over the set $ \mathcal{A}(y)$.
The parameter $\delta>0$ is the discount factor, and we assume  that $\delta>\mu$ in order to avoid arbitrage opportunities, as described in \cite[Proposition 3.4]{ZG15}. Given an initial condition $(x,y)\in\R_+\times\R_+$, we say that $(\xi^{\sell*},\xi^{\barrier*})\in \mathcal{A}_(y)$ is an optimal strategy if and only if 
\begin{equation*}
J_{x,y}(\xi^{\sell}, \xi^{\barrier})\leq J_{x,y}(\xi^{\sell*}, \xi^{\barrier*}),\ \text{for all}\ (\xi^{\sell}, \xi^{\barrier})\in \mathcal{A}(y).
\end{equation*}
The value function of this stochastic control problem is defined as
\begin{equation}\label{ValueFunction}
u_0(x,y)=\sup_{(\xi^{\sell}, \xi^{\barrier})\in\mathcal{A}(y)}J_{x,y}(\xi^{\sell},\xi^{\barrier}).
\end{equation}


\subsection{Regularity property of the model}
A remarkable property of our model is the {\it regularity} \cite{GS}, which is understood as the requirement that, first,  the optimisation problem  (\ref{ValueFunction}) has an optimal solution and,
second, as the initial position $y$ is positive, it should be  expected that the optimal execution strategy does not involve buying decisions along the time needed to liquidate the initial position.  In the rest of this section we elaborate along the second condition, while the first one will be  approached in the next section. 

Consider strategies where the agent only sell shares, i.e., strategies of the form $(\xi^{\sell},\xi^{\barrier})\in\mathcal{A}(y)$ such that $\xi^{\barrier}\equiv0$.   The subset of strategies $(\xi^{\sell},0)\in\mathcal{A}(y)$ is denoted by $\mathcal{A}^{\sell}(y)$, whose elements are represented only by $\xi^{\sell}$, and the value function $u(x,y)$ for this problem is given by
\begin{equation}\label{vf.1}
u(x,y)\eqdef\sup_{(\xi^{\sell},0)\in\mathcal{A}(y)}J_{x,y}(\xi^{\sell},0)=\sup_{\xi^{\sell}\in\mathcal{A}^{\sell}(y)}\E_{x}\biggr[\int_{0}^{\infty}\expo^{-\delta t}\left(X_t\circ_{\sell}\diff\xi^{\sell}-C_{\sell}\diff\xi_{t}^{\sell}\right)\biggl].
\end{equation}

It is clear that  $u(x,y)\leq u_0(x,y)$, since $\mathcal{A}^{\sell}(y)\subset\mathcal{A}(y)$; the equality between these value functions is established in the next result.

\begin{proposition}\label{opt_sell}
Let $u_0$ and $u$ be the value functions given in \eqref{ValueFunction} and \eqref{vf.1}, respectively. Then, $u=u_{0}$.
\end{proposition}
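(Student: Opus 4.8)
The plan is to show that any admissible strategy involving purchases can be dominated (in terms of the performance criterion) by a strategy that never buys, which together with the trivial inclusion $\mathcal{A}^{\sell}(y)\subseteq\mathcal{A}(y)$ gives $u=u_0$. Fix $(\xi^{\sell},\xi^{\barrier})\in\mathcal{A}(y)$ and suppose $\xi^{\barrier}\not\equiv 0$. The key economic observation is that a purchase of $\nu^{\barrier}_{T_j}$ shares at a decision time $T_j$ costs the agent $X_{T_j}\int_0^{\nu^{\barrier}_{T_j}}\expo^{\lambda u}\diff u + C_{\barrier}\nu^{\barrier}_{T_j}$ (a strict net outflow, since the integrand exceeds $1$ and $C_{\barrier}>0$), and it pushes the price \emph{up} by the multiplicative factor $\expo^{\lambda\nu^{\barrier}_{T_j}}$, which can only help future sales — but by \eqref{s.1} the position must still be driven to $0$, so every bought share must eventually be re-sold, incurring a further transaction cost $C_{\sell}$ and yielding revenue at a price that is itself depressed by that later sale.

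Concretely, I would argue by a ``removal/rerouting'' construction: starting from $(\xi^{\sell},\xi^{\barrier})$, define a new strategy $(\tilde\xi^{\sell},0)\in\mathcal{A}^{\sell}(y)$ that simply omits the buy orders and correspondingly reduces the subsequent sell orders so that the terminal condition $\lim_{t\to\infty}Y_t=0$ is preserved and the running inventory stays nonnegative. Because the impact is multiplicative, removing a buy at $T_j$ of size $\nu^{\barrier}_{T_j}$ multiplies the price path on $[T_j,\infty)$ by $\expo^{-\lambda\nu^{\barrier}_{T_j}}<1$; one then checks, order by order, that the discounted revenue net of costs does not decrease, using that (i) the eliminated buy term in \eqref{pc} was negative, (ii) the corresponding later sell term that is removed was a sale made at a higher price than under the new strategy (so removing it, together with removing the price inflation, is favourable), and (iii) the integrability condition $\E[\expo^{4\lambda\xi^{\barrier}_{t+}}]<\infty$ in \eqref{s.1} guarantees all the relevant expectations are finite so the comparison can be made term by term and passed to the limit by dominated/monotone convergence. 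This yields $J_{x,y}(\xi^{\sell},\xi^{\barrier})\le J_{x,y}(\tilde\xi^{\sell},0)\le u(x,y)$, and taking the supremum over $\mathcal{A}(y)$ gives $u_0\le u$; combined with $u\le u_0$ this proves $u=u_0$.

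The main obstacle is making the rerouting construction precise while keeping it admissible: one must specify exactly how the ``freed-up'' shares are redistributed among the later selling times so that $Y_t\ge 0$ throughout and $Y_\infty=0$, handle the case where a buy is never fully re-sold within finitely many steps, and verify the sign of the net change in \eqref{pc} accounting for the fact that \emph{all} subsequent transactions (both sells and any remaining buys) see the multiplicatively rescaled price. A cleaner route, which I would pursue if the bookkeeping becomes unwieldy, is to avoid constructing the modified strategy pathwise and instead bound $J_{x,y}(\xi^{\sell},\xi^{\barrier})$ from above directly: group the cash flows so that each bought share is matched with its eventual resale, show each such matched pair contributes a nonpositive amount to \eqref{pc} after discounting (here the inequality $\expo^{\lambda u}\ge 1\ge \expo^{-\lambda u}$ on the impact integrals, the positivity of $C_{\sell}+C_{\barrier}$, and $\delta>\mu$ are what force the sign), and conclude that deleting all buy–resell pairs can only increase the objective.
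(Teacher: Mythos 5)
Your overall plan---replace any strategy that buys by a pure-selling strategy that dominates it---is also the paper's plan, and your ``omit the buys and reduce the later sells'' construction is essentially the content of the paper's Lemma \ref{sell.1}, which produces $\bar\xi^{\sell}\in\mathcal{A}^{\sell}(y)$ with $\xi^{\sell}_t-\xi^{\barrier}_t\le\bar\xi^{\sell}_t\le\xi^{\sell}_t$ (that construction is itself nontrivial and gets a full appendix proof, so it cannot stay at the level of a sketch). The genuine gap is the domination step $J_{x,y}(\xi^{\sell},\xi^{\barrier})\le J_{x,y}(\bar\xi^{\sell},0)$: neither of your two routes establishes it. Both the ``order by order'' comparison and the ``delete each buy--resell pair'' argument founder on the point you yourself flag: with multiplicative impact, removing a buy of size $\nu^{\barrier}$ rescales the entire subsequent price path by $\expo^{-\lambda\nu^{\barrier}}$, so the change in \eqref{pc} is not the pair's own cash flows but also the lost revenue on every intermediate and later sale; a per-pair sign statement does not decouple these cross terms, and you give no argument that the totals still compare favourably. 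Moreover, the later resale times and sizes are adapted (the agent may resell precisely when the price happens to be high), so even the claim that a matched buy--resell pair has nonpositive expected discounted value requires a supermartingale/optional-sampling argument using $\delta>\mu$ that you never make.

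What makes the comparison clean in the paper is an identity your proposal does not use: integration by parts applied to $\expo^{-\delta t}X_t$ with $X_t=X^0_t\expo^{-\lambda(\xi^{\sell}_t-\xi^{\barrier}_t)}$ shows that, up to a stochastic integral whose expectation vanishes (here the condition $\E\big[\expo^{4\lambda\xi^{\barrier}_{t+}}\big]<\infty$ and the localization $\bar\tau_m$ enter), the pre-cost discounted gains equal
\[
\frac{1}{\lambda}\bigl(x-\expo^{-\delta t}X_t\bigr)-\frac{\delta-\mu}{\lambda}\int_0^t\expo^{-\delta s}X_s\,\diff s ,
\]
so they depend on the strategy only through the net-position path $\xi^{\sell}-\xi^{\barrier}$, and monotonically so because $\delta>\mu$. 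Combined with the sandwich from Lemma \ref{sell.1}, this yields in one stroke that $\bar\xi^{\sell}$ earns at least the gains of $(\xi^{\sell},\xi^{\barrier})$ while paying smaller discounted costs (since $\bar\xi^{\sell}\le\xi^{\sell}$ pointwise and the buy costs are dropped); taking expectations and letting $t,m\to\infty$ gives \eqref{s.5}. Without this identity, or an equivalent device that reduces the objective to a monotone functional of the net position, the term-by-term bookkeeping you describe is precisely the unresolved ``main obstacle'' you acknowledge, so the proof is not complete as proposed.
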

In order to prove this  result we need to use a technical tool described in the next   lemma, which proof is presented in the Appendix.
\begin{lemma}\label{sell.1}
	For each $(\xi^{\sell},\xi^{\barrier})\in\mathcal{A}(y)$, there exists $\bar{\xi}^{\sell}\in\mathcal{A}^{\sell}(y)$ such that $\xi^{\sell}_{t}-\xi^{\barrier}_{t}\leq\bar{\xi}^{\sell}_{t}\leq\xi^{\sell}_{t}$ for all $t\geq0$. 
\end{lemma}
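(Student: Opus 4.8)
The plan is to let $\bar\xi^{\sell}$ be the running maximum of the net amount already sold. Precisely, I would set
\[
\bar\xi^{\sell}_t\eqdef\sup_{0\le s\le t}\big(\xi^{\sell}_s-\xi^{\barrier}_s\big)=y-\inf_{0\le s\le t}Y_s,\qquad t\ge0,
\]
the second identity following from \eqref{sell_cond}. The two required inequalities are then immediate: taking $s=t$ in the supremum yields $\xi^{\sell}_t-\xi^{\barrier}_t\le\bar\xi^{\sell}_t$, while for every $s\le t$ one has $\xi^{\sell}_s-\xi^{\barrier}_s\le\xi^{\sell}_s\le\xi^{\sell}_t$ because $\xi^{\barrier}$ is nonnegative and $\xi^{\sell}$ is nondecreasing, hence $\bar\xi^{\sell}_t\le\xi^{\sell}_t$.

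It then remains to check that $\bar\xi^{\sell}\in\mathcal{A}^{\sell}(y)$. By construction $\bar\xi^{\sell}$ is nondecreasing with $\bar\xi^{\sell}_0=0$; being monotone it has right limits, and the running supremum of the left-continuous process $s\mapsto\xi^{\sell}_s-\xi^{\barrier}_s$ is again left-continuous (a short argument using the left-continuity of $\xi^{\sell}-\xi^{\barrier}$), so $\bar\xi^{\sell}$ is c\`agl\`ad. It is $\mathcal{F}_t$-adapted since, by left-continuity, the supremum over $[0,t]$ equals the supremum over $(\mathbb{Q}\cap[0,t])\cup\{t\}$. Moreover, because $\xi^{\sell}$ and $\xi^{\barrier}$ are of the form \eqref{control}, both are constant on each interval $(T_k,T_{k+1}]$, hence so are $\xi^{\sell}-\xi^{\barrier}$ and its running maximum $\bar\xi^{\sell}$; thus $\bar\xi^{\sell}$ is a pure-jump process whose jumps occur only at the Poisson times, and setting $\bar\nu^{\sell}_{T_k}\eqdef\bar\xi^{\sell}_{T_k+}-\bar\xi^{\sell}_{T_k}\ge0$ (which is $\mathcal{F}_{T_k}$-measurable) exhibits $\bar\xi^{\sell}$ in the form \eqref{control} with vanishing buying part, so the ``no simultaneous buying and selling'' restriction and the moment condition on the buying process in \eqref{s.1} hold trivially. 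Finally, writing $\bar Y_t\eqdef y-\bar\xi^{\sell}_t=\inf_{0\le s\le t}Y_s$ and using $Y\ge0$ gives $0\le\bar Y_t\le y$; since $t\mapsto\inf_{0\le s\le t}Y_s$ is nonincreasing, $\lim_{t\to\infty}\bar Y_t=\inf_{s\ge0}Y_s$, which equals $0$ because $Y_t\ge0$ for all $t$ and $\lim_{t\to\infty}Y_t=0$. Hence $\bar\xi^{\sell}\in\mathcal{A}^{\sell}(y)$ and the claim follows.

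The only genuinely delicate points are the path-regularity and measurability statements for the running supremum, together with the verification that $\bar\xi^{\sell}$ inherits the ``periodic'' structure \eqref{control}; this last point is where one must be careful with the c\`agl\`ad convention, since the jumps of admissible strategies are attached to the Poisson times $T_k$. Everything else is routine bookkeeping.
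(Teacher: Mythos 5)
Your construction is correct, and it is a genuinely different (and leaner) route than the paper's. The paper builds $\bar{\xi}^{\sell}$ by an explicit recursion over the selling times $\kappa_i$ of the original strategy, carrying forward the ``unused'' bought amounts through the quantities $\alpha_i^j$ and then verifying the two inequalities interval by interval on $[\kappa_{i-1},\kappa_i)$ by induction. You instead take $\bar{\xi}^{\sell}_t=\sup_{0\le s\le t}(\xi^{\sell}_s-\xi^{\barrier}_s)=y-\inf_{0\le s\le t}Y_s$, for which the sandwich $\xi^{\sell}_t-\xi^{\barrier}_t\le\bar{\xi}^{\sell}_t\le\xi^{\sell}_t$ is immediate (the upper bound using $\xi^{\barrier}\ge0$ and monotonicity of $\xi^{\sell}$), and the admissibility checks are exactly the ones you list: the running supremum of a left-continuous adapted process is left-continuous and adapted, it is constant on the intervals between Poisson arrivals because $\xi^{\sell}-\xi^{\barrier}$ is, so it has the representation \eqref{control} with $\bar\nu^{\sell}_{T_k}=\bar\xi^{\sell}_{T_k+}-\bar\xi^{\sell}_{T_k}\ge0$ being $\mathcal{F}_{T_k}$-measurable, and $\bar Y_t=\inf_{0\le s\le t}Y_s$ gives $0\le\bar Y_t\le y$ and $\bar Y_t\downarrow\inf_{s\ge0}Y_s=0$ by \eqref{sell_cond} and \eqref{s.1}. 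In fact your process coincides with the one produced by the paper's recursion (the $\alpha_i^j$ bookkeeping is precisely a pathwise computation of this running maximum), so the two proofs establish the same object; what the paper's version buys is an explicit formula for the number of shares sold at each decision time, while yours buys brevity, a transparent minimality property (it is the smallest nondecreasing majorant of $\xi^{\sell}-\xi^{\barrier}$ starting at $0$), and fewer places to slip on the c\`agl\`ad/jump-time conventions, which you correctly flag as the only delicate point.
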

 \begin{proof}[Proof of Proposition \ref{opt_sell}]
Recall that we only need to show  the inequality $u_{0}(x,y)\leq u(x,y)$. Let $(\xi^{\sell},\xi^{\barrier})$ be in $\mathcal{A}(y)$ and consider the strategy $\bar{\xi}^{\sell}\in\mathcal{A}^{\sell}(y)$ as in Lemma \ref{sell.1}.  Now, we consider the processes $X_{t}=X^{0}_{t}\exp\{-\lambda(\xi^{\sell}_{t}-\xi^{\barrier}_{t})\}$ and $\overline{X}_{t}=X^{0}_{t}\exp\{-\lambda\bar{\xi}^{\sell}_{t}\},$ which satisfy the following SDEs,  
\begin{equation}\label{s.2.1}
\begin{split}
dX_t&=\mu X_t\diff t+\sigma X_t\diff W_t -\lambda(X_t\circ_{\sell}\diff\xi_{t}^{\sell}- X_t\circ_{\barrier}\diff\xi_{t}^{\barrier}),\\
d\overline{X}_t&=\mu \overline{X}_t\diff t+\sigma \overline{X}_t\diff W_t -\lambda\overline{X}_t\circ_{\sell}\diff\bar{\xi}_{t}^{\sell},
\end{split}
\end{equation}
respectively, where $X_t\circ_{\sell}\diff\xi_{t}^{\sell}$, $X_t\circ_{\barrier}\diff\xi_{t}^{\barrier}$ and $\overline{X}_t\circ_{\sell}\diff\bar{\xi}_{t}^{\sell}$ are given as in \eqref{pc.1}. Applying integration by parts in $\expo^{-\delta(t\wedge\bar{\tau}_{m})}X_{t\wedge\bar{\tau}_{m}}$, where $\bar{\tau}_{m}\eqdef\inf\{t>0:X^{0}_{t}>m\}$, and  using \eqref{s.2.1}, it follows
\begin{align}\label{s.3}
\int_{0}^{t\wedge\bar{\tau}_{m}}&\expo^{-\delta t}\left(X_t\circ_{\sell}\diff\xi^{\sell}_t-X_t\circ_{\barrier}\diff\xi^{\barrier}_t-C_{\sell}\diff\xi_{t}^{\sell}-C_{\barrier}\diff\xi_{t}^{\barrier}\right)\\
=&\dfrac{1}{\lambda}\big(x-\expo^{-\delta(t\wedge\bar{\tau}_{m})}X^{0}_{t\wedge\bar{\tau}_{m}}\expo^{-\lambda(\xi^{\sell}_{t\wedge\bar{\tau}_{m}}-\xi^{\barrier}_{t\wedge\bar{\tau}_{m}})}\big)-\frac{\delta-\mu}{\lambda}\int_{0}^{t\wedge\bar{\tau}_{m}}\expo^{-\delta s}X_{s}^{0}\expo^{-\lambda(\xi^{\sell}_{s}-\xi^{\barrier}_{s})}\diff s\notag\\
&\quad-\int_{0}^{t\wedge\bar{\tau}_{m}}\expo^{-\delta s}(C_{\sell}\diff\xi_{s}^{\sell}+C_{\barrier}\diff\xi^{\barrier}_{s})+\frac{\sigma}{\lambda}\int_{0}^{t\wedge\bar{\tau}_{m}}\expo^{-\delta s}X_{s}^{0}\expo^{-\lambda(\xi^{\sell}_{s}-\xi^{\barrier}_{s})}\diff W_{s}\notag\\
\leq &\dfrac{1}{\lambda}\big(x-\expo^{-\delta(t\wedge\bar{\tau}_{m})}X^{0}_{t\wedge\bar{\tau}_{m}}\expo^{-\lambda\bar{\xi}^{\sell}_{t\wedge\bar{\tau}_{m}}}\big)-\frac{\delta-\mu}{\lambda}\int_{0}^{t\wedge\bar{\tau}_{m}}\expo^{-\delta s}X_{s}^{0}\expo^{-\lambda\bar{\xi}^{\sell}_{s}}\diff s\notag\\
&-\int_{0}^{t\wedge\bar{\tau}_{m}}\expo^{-\delta s}C_{\sell}\diff\bar{\xi}_{s}^{\sell}+\frac{\sigma}{\lambda}\int_{0}^{t\wedge\bar{\tau}_{m}}\expo^{-\delta s}X_{s}^{0}\expo^{-\lambda(\xi^{\sell}_{s}-\xi^{\barrier}_{s})}\diff W_{s}\notag\\
=&\int_{0}^{t\wedge\bar{\tau}_{m}}\expo^{-\delta s}(\overline{X}_{s}\circ_{s}\diff\bar{\xi}^{\sell}_{s}-C_{\sell}\diff\bar{\xi}_{s}^{\sell})+\frac{\sigma}{\lambda}\int_{0}^{t\wedge\bar{\tau}_{m}}\expo^{-\delta s}X_{s}^{0}(\expo^{-\lambda(\xi^{\sell}_{s}-\xi^{\barrier}_{s})}-\expo^{-\lambda\bar{\xi}_{s}^{\sell}})\diff W_{s}\notag,
\end{align}
Note that $\displaystyle\int_{0}^{t\wedge\bar{\tau}_{m}}\expo^{-\delta s}X_{s}^{0}(\expo^{-\lambda(\xi^{\sell}_{s}-\xi^{\barrier}_{s})}-\expo^{-\lambda\bar{\xi}_{s}^{\sell}})\diff B_{s}$ is a square-martingale, since by hypothesis \\ $\E[\expo^{4\lambda\xi^{\barrier}_{(t\wedge\bar{\tau}_{m})+}}]<\infty$, whose expected value is zero; see more in \cite[p. 293]{ZG15}. Then, taking expected value in \eqref{s.3}, it yields 
\begin{multline*}
\E_{x}\bigg[\int_{0}^{t\wedge\bar{\tau}_{m}}\expo^{-\delta t}\left(X_t\circ_{\sell}\diff\xi^{\sell}_t -X_t\circ_{\barrier}\diff\xi^{\barrier}_t-C_{\sell}\diff\xi_{t}^{\sell} -C_{\barrier}\diff\xi_{t}^{\barrier}\right)\bigg]\\
\leq\E_{x}\bigg[\int_{0}^{t\wedge\bar{\tau}_{m}}\expo^{-\delta s}(\overline{X}_{s}\circ_{s}\diff\bar{\xi}^{\sell}_{s}-C_{\sell}\diff\bar{\xi}_{s}^{\sell})\bigg].
\end{multline*}
Then, letting $t,m\rightarrow\infty$ and using the Dominated Convergence Theorem, we get
\begin{equation}\label{s.5}
J_{x,y}(\xi^{\sell},\xi^{\barrier})\leq J_{x,y}(\bar{\xi}^{\sell},0)\leq u(x,y). 
\end{equation}
Since the above inequality holds for each $(\xi^{\sell},\xi^{\barrier})\in\mathcal{A}(y)$, we conclude that $u_{0}(x,y)\leq u(x,y)$, completing the proof of this result.  \end{proof}

\section{HJB equations and optimal execution}\label{PR}
The remainder of this paper is devoted to the presentation of an optimal solution to the  execution problem (\ref{vf.1}),  paying particular attention to the structure of this strategy.  Roughly, we look for  optimal selling strategies with a  simple structure, facilitating its adaptability from the practical view point. One of the main results of this work establishes the existence of an optimal strategy $\hat{\xi}^{\sell}$, which has a barrier form in the state space.
A {\it barrier strategy} is described in terms of the remaining number of shares to be sold and the level of the observed price at each period of time. These are compared with a mark, which is decided from the beginning and depends on a non-negative constant $F$, referred hereafter as a {\it periodic barrier}.

More precisely, fixing a periodic barrier $F >0$, the number of shares to be  sold in the $i$-th  arrival time $T_i$ of  the Poisson process  $N^{\gamma}$,  is given by
\begin{equation}\label{Bstrat1}
\nu_{F}(T_i)\eqdef Y_{T_i}\wedge \dfrac{1}{\lambda}(\ln X_{T_i}-\ln F)^+,
\end{equation}
where  $(X_{T_i}, Y_{T_i})$ is the position of the state process at the arrival time $T_i$ and $(\ln X_{T_i}-\ln F)^+\eqdef\max\{0,(\ln X_{T_i}-\ln F)\}$. This type of strategies is denoted by $\xi^{\sell, F}$, and the set of these strategies is defined by $\mathcal{A}^{\sell}_{\barrier}(y)$, which is clearly   a subset of    $\mathcal{A}^{\sell}(y)$. The corresponding value function for this set of strategies  is defined as
\begin{equation}\label{ValueFunction2}
u_{\barrier}(x,y)=\sup_{\xi^{\sell, F}\in \mathcal{A}^{\sell}_{\barrier}(y)} J_{x,y}(\xi^{\sell, F},0),
\end{equation}
with $J_{x,y}$  as in \eqref{pc}.

In order to relate the above value function $u_{\barrier}$ with the original one $u$, defined in (\ref{vf.1}), it is convenient  to provide a brief description of the approach to be followed. As a first step, we shall solve the problem (\ref{ValueFunction2}) using dynamic programming techniques.  Noting that the periodic strategies are described by a single parameter, the crucial point consists in proving that there is a parameter $F_\gamma$, defined below by (\ref{F1}), for which there exists a smooth solution $v$ to the HJB equation (\ref{HJBin2}) associated with $u_{\barrier}$; see Proposition \ref{Tcons1}.  As a second step,  using the parameter $F_\gamma$ a periodic strategy  $\xi^{\sell, F_\gamma}$ can be  built using the expression (\ref{Bstrat1}).  Finally, using the HJB equation associated with the original value function $u$ described in (\ref{HJBP}), in the last step it is proved that $v$ solves that equation, and that $\xi^{\sell, F_\gamma}$ is optimal  for $u$ within the set $\mathcal{A}^{\sell}(y)$, concluding that $u_{\barrier}= u$; see the Verification Theorem \ref{LemVer1}.


\begin{remark}
By standard dynamic programming arguments, it is well known that the value functions $u$  and $u_{\barrier}$, are associated to HJB equations, which are given, respectively, by
\begin{equation}\label{HJBP}
\begin{cases}
\mathcal{L}w(x,y)+\displaystyle\max_{0\leq l\leq y}\{\gamma G(x,y,l;w)\}=0,& \text{for all}\ x>0\ \text{and}\ y>0, \\
w(x,y)=0,&\text{for all}\ x>0\ \text{and}\ y=0,
\end{cases}
\end{equation}
\begin{equation}\label{HJBin2}
\begin{cases}
 \mathcal{L}v(x,y)+\gamma G\biggr(x,y,\biggl[y\wedge \dfrac{1}{\lambda}\ln(x/F_{\gamma})^+\biggr];v\biggl)= 0,&\text{for all}\ x>0\ \text{and}\ y>0,\\
 v(x,y)=0,&  \text{for all}\ x>0\ \text{and}\ y=0,
 \end{cases}
\end{equation}
where $F_{\gamma}$ is a positive constant which will be determined later on. Here, the operators $\mathcal{L}$ and $G$ are defined by
\begin{align}
\mathcal{L}f(x,y)&\eqdef\dfrac{1}{2}\sigma^2x^2f_{xx}(x,y)+\mu xf_x(x,y)-\delta f(x,y),\label{Lap}\\
G(x,y,l;f)&\eqdef f(x\expo^{-\lambda l},y-l)-f(x,y)+\dfrac{1}{\lambda}(1-\expo^{-\lambda l})x-C_{\sell}l.\label{maxop1}
\end{align}
\end{remark}

\subsection{Construction and regularisation of the solution $v$ to the HJB equation (\ref{HJBin2})}\label{Z3}

Observe that   we can simplify the HJB equation \eqref{HJBin2} according with the following three different scenarios:
 \begin{itemize}
 	\item [(i)] When $x<F_{\gamma}$, this restriction  corresponds to the {\it waiting region} $\mathcal{W}$ because the price is too low for selling any shares to be optimal, and therefore \eqref{HJBin2} takes the form
 	\begin{equation}\label{HJB0}
 	\mathcal{L}v(x,y)=0.
 	\end{equation}
 	 	\item[(ii)] When $F_{\gamma}\leq x< F_{\gamma}\expo^{\lambda y}$, the agent takes and intermediate position of selling $\mathbb{Y}(x)\eqdef\dfrac{1}{\lambda}\ln(x/F_{\gamma})$  assets. Now since $\expo^{-\lambda\mathbb{Y}(x)}=\dfrac{F_{\gamma}}{x}$,  \eqref{HJBin2} can be written as
 	 \begin{align}\label{HJBin5}
 	 	\mathcal{L}_{\gamma}v(x,y)&+\gamma\biggr[v\left(F_{\gamma},
 	 	y-\mathbb{Y}(x)\right)+\dfrac{x-F_{\gamma}}{\lambda}-C_{\sell}\mathbb{Y}(x)\biggr]= 0,
 	 \end{align}
where
 	\begin{equation*}
 	\mathcal{L}_{\gamma}v(x,y)\eqdef\dfrac{1}{2}\sigma^2x^2v_{xx}(x,y)+\mu xv_x(x,y)-(\delta+\gamma) v(x,y).
 	\end{equation*}
 	\item[(iii)]  Finally, when $x\geq F_{\gamma}\expo^{\lambda y}$, we have that the asset price is sufficiently high and the corresponding decision is to execute the complete set of assets available, and then \eqref{HJBin2} is reduced to
 	\begin{align}\label{HJBin4}
 		 \mathcal{L}_{\gamma}v(x,y)+\gamma\biggl[
 		 \dfrac{1}{\lambda}(1-\expo^{-\lambda y})x-C_{\sell}y\biggr]=0.
 	\end{align}
\end{itemize}
 We  obtain explicit solutions for each  one of the three regions, which are described  in \eqref{HJBg1}. A quite important issue in the form of the solutions proposed below is the smoothness at the boundary of each region, which derives in obtaining  an explicit form of the general solution. The proof of the following result is given in the Appendix.
\begin{proposition}\label{Tcons1}
The HJB equation \eqref{HJBin2} has a solution $v$, which belongs to $\hol^{2,1}(\R_{+}\times\R_{+})$ and	 is given by
\begin{equation}\label{HJBg1}
v(x,y)=
\begin{cases}
\displaystyle 0, &\text{if}\ y=0\ \text{and}\ x>0,\\
\displaystyle\dfrac{(F_{\gamma}-C_{\sell})(1-\expo^{-\lambda ny})x^{n}}{\lambda n F_{\gamma}^{n}},& \text{if}\ y>0\ \text{and}\ x<F_{\gamma},\\
\displaystyle A_{\gamma}\biggl(\dfrac{x}{F_{\gamma}}\biggr)^{m_{\gamma}}-\dfrac{(F_{\gamma}-C_{\sell})\expo^{-\lambda ny}x^{n}}{\lambda nF_{\gamma}^{n}}&\\
\hspace{3cm}+\dfrac{\gamma x}{\lambda(\eta+\gamma)}-\dfrac{\gamma C_{\sell}\ln x}{\lambda(\delta+\gamma)}+C_{\gamma}, &\text{if}\ y>0\ \text{and}\ F_{\gamma}\leq x< F_{\gamma}\expo^{\lambda y},\\
\displaystyle \dfrac{A_{\gamma}(1-\expo^{-\lambda m_{\gamma}y})x^{m_{\gamma}}}{F_{\gamma}^{m_{\gamma}}}+\dfrac{\gamma x(1-\expo^{-\lambda y})}{\lambda(\eta+\gamma)}-\dfrac{\gamma C_{\sell}y}{\delta+\gamma},&\text{if}\ y>0\ \text{and}\  x\geq F_{\gamma}\expo^{\lambda y} ,
\end{cases}
\end{equation}
where $\eta\eqdef\delta-\mu$,
\begin{align}
F_{\gamma}&\eqdef \dfrac{\displaystyle\dfrac{C_{\sell}}{\delta+\gamma}\biggl(\delta-m_{\gamma}\biggl(\dfrac{\delta}{n}+\dfrac{\gamma b}{\delta+\gamma}\biggl)\biggl)}{\displaystyle\dfrac{\eta}{\eta+\gamma}-\dfrac{m_{\gamma}}{\delta+\gamma}\biggl(\dfrac{\delta}{n}-\dfrac{\gamma\mu}{\eta+\gamma}\biggl)},\label{F1}\\
A_{\gamma}&\eqdef\dfrac{F_{\gamma}}{\lambda (\delta+\gamma)}\biggr(\dfrac{\delta}{n}-\dfrac{\gamma\mu}{\eta+\gamma}\biggl)-\dfrac{C_{\sell}}{\lambda(\delta+\gamma)}\biggl(\dfrac{\delta}{n}+\dfrac{\gamma b}{\delta+\gamma}\biggr),\label{Cg1}\\
C_{\gamma}&\eqdef \dfrac{\gamma(F_{\gamma}-C_{\sell})}{\lambda n(\delta+\gamma)}+ \dfrac{\gamma}{\lambda(\delta+\gamma)}\biggl(\dfrac{bC_{\sell}}{\delta+\gamma}+C_{\sell}\ln F_{\gamma}-F_{\gamma}\biggr),\label{Cg2}
\end{align}
and $b\eqdef\dfrac{1}{2}\sigma^{2}-\mu$. The constants $n$, $m_{\gamma}$  are the positive and negative solutions  to
\begin{align}
 &\dfrac{1}{2}\sigma^2l^{2}-b l-\delta=0,\label{pos1}\\
&\dfrac{1}{2}\sigma^2l^{2}-b l-(\delta+\gamma)=0,\label{neg1}
\end{align}
respectively.
\end{proposition}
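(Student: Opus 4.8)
\emph{Strategy.} The plan is to build \eqref{HJBg1} by solving the three reduced equations \eqref{HJB0}, \eqref{HJBin5}, \eqref{HJBin4} — one linear second-order ODE in $x$ per region, with $y$ a parameter — together with $v(\cdot,0)=0$, and then pasting the branches with enough regularity; the proof proper is the verification that the resulting $v$ lies in $\hol^{2,1}$ and solves each reduced equation, hence \eqref{HJBin2} (on the three regions the maximiser $l$ in \eqref{HJBin2} equals $0$, $\mathbb{Y}(x)$, $y$, so its nonlocal term collapses to $0$, to $v(F_\gamma,\cdot)$, and to $v(\cdot,0)=0$). Throughout I use $\mathcal{L}x^{l}=\bigl[\frac12\sigma^{2}l^{2}-bl-\delta\bigr]x^{l}$ and $\mathcal{L}_\gamma x^{l}=\bigl[\frac12\sigma^{2}l^{2}-bl-(\delta+\gamma)\bigr]x^{l}$, so the characteristic polynomials are \eqref{pos1}, \eqref{neg1}; write $n>0>\bar m$ for the roots of \eqref{pos1} and $\hat n_\gamma>0>m_\gamma$ for those of \eqref{neg1}, and note $n>1$, $\hat n_\gamma>1$ (at $l=1$ these polynomials equal $-\eta$, $-(\eta+\gamma)$, both negative since $\delta>\mu$).

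\emph{The three branches.} On $\{x<F_\gamma\}$, \eqref{HJB0} is homogeneous; dropping $x^{\bar m}$, which is unbounded as $x\downarrow0$ — while any candidate for \eqref{ValueFunction2} is dominated by a linear function of $x$, cf.\ the integration by parts in the proof of Proposition~\ref{opt_sell} — leaves $v=g(y)x^{n}$, $g(0)=0$; the matching and the smooth-fit below force $g(y)=\frac{(F_\gamma-C_{\sell})(1-\expo^{-\lambda n y})}{\lambda n F_\gamma^{n}}$. On $\{x\ge F_\gamma\expo^{\lambda y}\}$, inserting $\alpha(y)x+\beta(y)$ into \eqref{HJBin4} yields the particular solution $\frac{\gamma(1-\expo^{-\lambda y})}{\lambda(\eta+\gamma)}x-\frac{\gamma C_{\sell}y}{\delta+\gamma}$, and of the homogeneous modes $x^{m_\gamma},x^{\hat n_\gamma}$ the superlinear one is excluded by the growth bound, leaving $h(y)x^{m_\gamma}$, $h(0)=0$. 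The delicate branch is $\{F_\gamma\le x<F_\gamma\expo^{\lambda y}\}$: since the barrier is a \emph{price}, after a sell the price equals exactly $F_\gamma$, so the nonlocal term in \eqref{HJBin5} is $v(F_\gamma,y-\mathbb{Y}(x))$, which by the waiting-region formula equals $\frac{F_\gamma-C_{\sell}}{\lambda n}\bigl(1-\expo^{-\lambda n y}(x/F_\gamma)^{n}\bigr)$; substituting turns \eqref{HJBin5} into a bona fide Euler ODE in $x$ with source a combination of $x^{n},x,\ln x$ and a constant. A particular solution is built term by term, the crucial point being that, as $n$ solves \eqref{pos1}, $\mathcal{L}_\gamma x^{n}=-\gamma x^{n}$, so the $x^{n}$-source is absorbed by $-\frac{(F_\gamma-C_{\sell})\expo^{-\lambda n y}}{\lambda n F_\gamma^{n}}x^{n}$; the $x$- and $\ln x$-sources give $\frac{\gamma x}{\lambda(\eta+\gamma)}$ and $-\frac{\gamma C_{\sell}\ln x}{\lambda(\delta+\gamma)}$; and $C_\gamma$ of \eqref{Cg2} is exactly the constant making the leftover constants cancel — i.e.\ \eqref{HJBin5} holds on this region iff $C_\gamma$ is as in \eqref{Cg2}. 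The surviving homogeneous mode is $A_\gamma(x/F_\gamma)^{m_\gamma}$; the $x^{\hat n_\gamma}$ mode is excluded (as for region (iii)) by the growth of $v$ along the interface $x=F_\gamma\expo^{\lambda y}$.

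\emph{Pasting, the barrier, and regularity.} Now impose continuity of $v,v_x$ across $\{x=F_\gamma\}$ and across $\{x=F_\gamma\expo^{\lambda y}\}$: by construction the $y$-dependent terms ($\expo^{\pm\lambda n y},\expo^{\lambda m_\gamma y},\expo^{\lambda y}$, the linear-in-$y$ ones) cancel in each identity — e.g.\ the $\expo^{-\lambda n y}$ in the waiting and intermediate formulas cancel at $\{x=F_\gamma\}$ — and, using linear independence of the distinct exponentials $\expo^{\lambda m_\gamma y},\expo^{\lambda\hat n_\gamma y},\expo^{\lambda n y},\expo^{\lambda y},1$ (pairwise distinct exponents, by $\gamma>0$ and $\delta>\mu$), what is left is a handful of $y$-independent scalar relations among $F_\gamma,A_\gamma,C_\gamma$ (those at $\{x=F_\gamma\expo^{\lambda y}\}$ reproducing those at $\{x=F_\gamma\}$) and the determination of $h$. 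The barrier $F_\gamma$ itself is singled out by the control-smooth-fit at the free boundary — that the maximiser in \eqref{HJBP} passes continuously through $0$ at $x=F_\gamma$, equivalently $\lambda F_\gamma v_x(F_\gamma,\cdot)+v_y(F_\gamma,\cdot)=F_\gamma-C_{\sell}$, which pins the waiting-region amplitude to $\frac{F_\gamma-C_{\sell}}{\lambda nF_\gamma^{n}}$ — and feeding this in gives a determined linear system: the region-(ii) solvability relation yields $C_\gamma$ of \eqref{Cg2}, the $C^{0}$-matching then yields $A_\gamma$ of \eqref{Cg1}, and the $C^{1}$-matching (after the $C_{\sell}\ln F_\gamma$-terms cancel) yields an equation \emph{linear} in $F_\gamma$ whose solution is \eqref{F1}; one checks, using $\delta>\mu$, $\gamma>0$, $m_\gamma<0$, that its denominator is nonzero and $F_\gamma>0$. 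The regularity upgrade is then automatic: each branch of \eqref{HJBg1} is $\hol^{\infty}$ in the interior of its region; $v_{xx}$ is continuous across each interface because the equations on the two sides agree there — at $\{x=F_\gamma\}$ the $\gamma$-term of \eqref{HJBin5} reduces to $\gamma v(F_\gamma,y)$, so both \eqref{HJB0} and \eqref{HJBin5} read $\mathcal{L}v=0$ at $x=F_\gamma$, while at $\{x=F_\gamma\expo^{\lambda y}\}$ the $\gamma$-terms of \eqref{HJBin5} and \eqref{HJBin4} coincide, so in either case $\frac12\sigma^{2}x^{2}v_{xx}$ is determined by $v$ and $v_x$, which already match; and since the value-matchings hold for all $y$, differentiating them in $y$ (along $x=F_\gamma\expo^{\lambda y}$ in the second case, using $v_x$-matching) gives continuity of $v_y$. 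Hence $v\in\hol^{2,1}(\R_+\times\R_+)$ and, branch by branch, solves \eqref{HJB0}, \eqref{HJBin5}, \eqref{HJBin4}, i.e.\ \eqref{HJBin2}.

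\emph{Main obstacle.} The conceptual crux is region (ii): one must recognise that a \emph{price} barrier places the post-sell state exactly on the interface $\{x=F_\gamma\}$, which lets the nonlocal term of \eqref{HJBin5} be rewritten through the already-known waiting-region branch and so collapses that functional equation to a solvable linear Euler ODE; the rest is careful but routine bookkeeping that makes $C_\gamma$, $A_\gamma$ and $F_\gamma$ drop out of the solvability, smooth-fit and matching relations, plus the check that \eqref{F1} is a genuine positive number under the model's standing hypotheses.
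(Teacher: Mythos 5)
Your proposal is correct and follows essentially the same route as the paper: solve the three reduced equations branch by branch (with the nonlocal term in region (ii) rewritten through the waiting-region formula), let the solvability of that Euler ODE fix $C_{\gamma}$, the value matching at $x=F_{\gamma}$ fix $A_{\gamma}$, the first-derivative matching fix $F_{\gamma}$, the matching at $x=F_{\gamma}\expo^{\lambda y}$ fix the region-(iii) amplitude, and then verify $\hol^{2,1}$ regularity by the same PDE argument for $v_{xx}$. Your "control smooth-fit" condition $\lambda F_{\gamma}v_x(F_{\gamma},\cdot)+v_y(F_{\gamma},\cdot)=F_{\gamma}-C_{\sell}$ is exactly the paper's ODE for the waiting-region amplitude (its $K_x(F_{\gamma},y)=0$ condition), and your chain-rule argument for $v_y$-continuity and growth-based mode selection are harmless cosmetic variants of the paper's direct computations.
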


\begin{remark}
The fact that $\delta>\mu$ implies that the solution $n$ to \eqref{pos1} satisfies that $n>1$, for all $\sigma\in\R$.  
\end{remark}

\subsection{Equivalence between the HJB equations}\label{OE1}
The rest of this paper  is dedicated to verify that the strategy given in \eqref{Bstrat1}, with barrier $F_{\gamma}$, defined above   in \eqref{F1}, is optimal within the set of strategies $\mathcal{A}^{\sell}(y)$, and that the function $v$ given in \eqref{HJBg1}, satisfies the HJB equation \eqref{HJBP}. To this end we need  the following technical results,  whose proof is given in the Appendix in order to introduce in a succinct way their  main consequences.

 \begin{lemma}\label{ac1}
	Let $a_{\gamma}$ be defined by
	\begin{equation}\label{a1}
a_{\gamma}\eqdef\dfrac{\displaystyle\dfrac{1}{\delta+\gamma}\biggl(\delta-m_{\gamma}\biggl(\dfrac{\delta}{n}+\dfrac{\gamma b}{\delta+\gamma}\biggl)\biggl)}{\displaystyle\dfrac{\eta}{\eta+\gamma}-\dfrac{m_{\gamma}}{\delta+\gamma}\biggl(\dfrac{\delta}{n}-\dfrac{\gamma\mu}{\eta+\gamma}\biggl)},
\end{equation}
where $\gamma>0$. Then, for each $\gamma>0$, $ 1< a_{\gamma}< \dfrac{n}{n-1}$, and it satisfies the following asymptotic limits
\begin{equation*}
\begin{cases}
a_{\gamma}\rightarrow1,& \text{when}\ \gamma\rightarrow0,\\
a_{\gamma}\rightarrow\dfrac{n}{n-1},& \text{when}\ \gamma\rightarrow\infty.
\end{cases}
\end{equation*}
\end{lemma}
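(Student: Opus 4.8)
The quantity $a_{\gamma}$ is a ratio of two affine functions of $m_{\gamma}$, so the plan is to first record the precise dependence of $m_{\gamma}$ on $\gamma$ and then treat $a_{\gamma}$ as a one-variable function. From \eqref{neg1}, $m_{\gamma}$ is the negative root of $\tfrac12\sigma^2 l^2-bl-(\delta+\gamma)=0$, hence $m_{\gamma}=\dfrac{b-\sqrt{b^2+2\sigma^2(\delta+\gamma)}}{\sigma^2}$; in particular $m_{\gamma}<0$ for all $\gamma>0$, $m_{\gamma}$ is strictly decreasing in $\gamma$, $m_{0}=\dfrac{b-\sqrt{b^2+2\sigma^2\delta}}{\sigma^2}$ (which equals $1-n$, since $n$ and $m_0$ are the two roots of \eqref{pos1} and their sum is $2b/\sigma^2$ while... — more simply, $n(1-n)$ relations from Vi\`{e}te give $m_0=1-n$ is false in general; instead I will only use $m_0<0$), and $m_{\gamma}\to-\infty$ as $\gamma\to\infty$. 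I will also note $n>1$ from the Remark following Proposition \ref{Tcons1}.

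Next I would establish the two asymptotic limits, which are the easy endpoints. As $\gamma\to 0$: $m_{\gamma}\to m_0$ (finite), $\dfrac{\gamma b}{\delta+\gamma}\to 0$, $\dfrac{\gamma\mu}{\eta+\gamma}\to 0$, $\dfrac{1}{\delta+\gamma}\to\dfrac1\delta$, $\dfrac{\eta}{\eta+\gamma}\to 1$, so numerator $\to \dfrac1\delta(\delta-m_0\delta/n)=1-m_0/n$ and denominator $\to 1-m_0\delta/(n\delta)=1-m_0/n$; the two agree, giving $a_{\gamma}\to 1$. As $\gamma\to\infty$: here $m_{\gamma}\to-\infty$, so both numerator and denominator are dominated by their $m_{\gamma}$-terms. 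Factoring out $m_{\gamma}$, numerator $\sim -m_{\gamma}\cdot\dfrac{1}{\delta+\gamma}\bigl(\dfrac{\delta}{n}+\dfrac{\gamma b}{\delta+\gamma}\bigr)\to -m_{\gamma}\cdot(b/\gamma)(1+o(1))$ and denominator $\sim -m_{\gamma}\cdot\dfrac{1}{\delta+\gamma}\bigl(\dfrac{\delta}{n}-\dfrac{\gamma\mu}{\eta+\gamma}\bigr)\to -m_{\gamma}\cdot(-\mu/\gamma)(1+o(1))$; I must be careful with the lower-order terms since the ratio is $b/(-\mu)$ to leading order unless $b=0$, so I would instead use the defining quadratic $\tfrac12\sigma^2 m_{\gamma}^2=bm_{\gamma}+(\delta+\gamma)$ to replace $(\delta+\gamma)$ and get the clean value. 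The cleanest route: divide numerator and denominator of $a_{\gamma}$ by $m_{\gamma}$, use $(\delta+\gamma)=\tfrac12\sigma^2 m_{\gamma}^2-bm_{\gamma}$ so that $\dfrac{\gamma}{\delta+\gamma}\to 1$ and the $1/(\delta+\gamma)$ prefactors behave like $\tfrac{2}{\sigma^2 m_{\gamma}^2}$, and conclude after simplification that the limit is $\dfrac{n}{n-1}$. This computation is routine but fiddly; I expect to verify it by showing $a_{\gamma}-\dfrac{n}{n-1}=\dfrac{(\text{expression})\cdot m_{\gamma}^{-2}}{(\text{denominator})}\to 0$.

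For the uniform bounds $1<a_{\gamma}<\dfrac{n}{n-1}$, the plan is to show the two inequalities $a_{\gamma}>1$ and $a_{\gamma}<\dfrac{n}{n-1}$ separately by clearing denominators, after first checking that the denominator $D(\gamma):=\dfrac{\eta}{\eta+\gamma}-\dfrac{m_{\gamma}}{\delta+\gamma}\bigl(\dfrac{\delta}{n}-\dfrac{\gamma\mu}{\eta+\gamma}\bigr)$ is strictly positive for all $\gamma>0$ (this is needed so the inequalities do not flip). Positivity of $D(\gamma)$: since $m_{\gamma}<0$, the sign hinges on $\dfrac{\delta}{n}-\dfrac{\gamma\mu}{\eta+\gamma}$; if this is $\ge 0$ then $-m_{\gamma}(\cdots)\ge0$ and $D>0$ immediately, and if it is $<0$ I would bound $-m_{\gamma}\dfrac{\gamma\mu}{(\delta+\gamma)(\eta+\gamma)}$ against $\dfrac{\eta}{\eta+\gamma}$ using $|m_{\gamma}|\le \sqrt{2(\delta+\gamma)/\sigma^2}+|b|/\sigma^2$ — but more efficiently, I suspect $D(\gamma)>0$ follows from rewriting $D$ using the quadratic identity for $m_{\gamma}$, and I would aim for that. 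Then:

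For $a_{\gamma}>1$, i.e. $N(\gamma)>D(\gamma)$ where $N(\gamma)=\dfrac{1}{\delta+\gamma}\bigl(\delta-m_{\gamma}(\tfrac{\delta}{n}+\tfrac{\gamma b}{\delta+\gamma})\bigr)$: subtracting, $N-D=\dfrac{\delta}{\delta+\gamma}-\dfrac{\eta}{\eta+\gamma}-\dfrac{m_{\gamma}}{\delta+\gamma}\bigl(\tfrac{\gamma b}{\delta+\gamma}+\tfrac{\gamma\mu}{\eta+\gamma}\bigr)$. Note $\dfrac{\delta}{\delta+\gamma}-\dfrac{\eta}{\eta+\gamma}=\dfrac{\gamma(\delta-\eta)}{(\delta+\gamma)(\eta+\gamma)}=\dfrac{\gamma\mu}{(\delta+\gamma)(\eta+\gamma)}>0$ since $\delta-\eta=\mu>0$, and $b+\mu=\tfrac12\sigma^2>0$ so $\tfrac{\gamma b}{\delta+\gamma}+\tfrac{\gamma\mu}{\eta+\gamma}$ has the sign of a positive combination; together with $m_{\gamma}<0$ the second term is $\ge 0$ when that combination is $\ge0$. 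Handling the case where $\tfrac{\gamma b}{\delta+\gamma}+\tfrac{\gamma\mu}{\eta+\gamma}<0$ (possible if $b<0$) is the one genuinely delicate point, and there I would again invoke the quadratic relation $\tfrac12\sigma^2 m_{\gamma}^2-bm_{\gamma}=\delta+\gamma$ to substitute for $b m_{\gamma}$ and reduce $N-D>0$ to a manifestly positive expression. For $a_{\gamma}<\dfrac{n}{n-1}$, i.e. $(n-1)N(\gamma)<nD(\gamma)$: form $nD-(n-1)N$ and, after multiplying through by $(\delta+\gamma)$ and $(\eta+\gamma)$ to clear denominators, reduce to a polynomial inequality in $\gamma$ and $m_{\gamma}$; substitute $(\delta+\gamma)=\tfrac12\sigma^2 m_{\gamma}^2-bm_{\gamma}$ to eliminate $\gamma$ in favour of $m_{\gamma}$ (recalling $\gamma=\tfrac12\sigma^2 m_{\gamma}^2-bm_{\gamma}-\delta$, valid and nonnegative for $\gamma>0$), and similarly use that $n$ solves \eqref{pos1}, $\tfrac12\sigma^2 n^2-bn=\delta$, to eliminate $\delta$; the claim should collapse to something like $(n-m_{\gamma})^2\cdot(\text{positive})>0$ or a product of the manifestly-signed factors $n>1$, $n-m_{\gamma}>0$, $\gamma>0$. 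The main obstacle, as indicated, is precisely this last algebraic reduction: choosing the right substitutions so that the quartic-looking inequality in $(n,m_{\gamma})$ factors transparently. I would carry it out by treating everything as a polynomial in $m_{\gamma}$ with coefficients in $n,\sigma,b,\delta$ (then $b,\delta$ themselves expressed via $n$), and checking the resulting low-degree polynomial has the right sign using $m_{\gamma}<0<1<n$ and the two defining quadratics. The endpoint limits computed above serve as a useful consistency check on the bounds (the bounds are attained in the limit, so the reduced polynomial must vanish at the corresponding boundary values of $m_{\gamma}$, which tells me the factorisation to look for).
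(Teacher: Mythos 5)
Your skeleton coincides with the paper's: get both endpoint limits from the identities $\delta+nb=\tfrac12\sigma^2n^2$ and $\delta-\mu n=\tfrac12\sigma^2n(n-1)$ (i.e.\ \eqref{a1.1}, consequences of \eqref{pos1}), and prove the strict bounds by clearing denominators and using $m_{\gamma}<0$, $n>1$. The limit parts are essentially fine: at $\gamma\to0$ your computation is the paper's, and at $\gamma\to\infty$ your leading-order slip (dropping $\delta/n$ against $b$) is repaired by the route you yourself propose; the paper does it more directly by multiplying numerator and denominator of \eqref{a1} by $(\delta+\gamma)/m_{\gamma}$, obtaining $(\delta+nb)/(\delta-n\mu)=n/(n-1)$.

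The genuine gap is that the core of the lemma, the uniform bounds $1<a_{\gamma}<\tfrac{n}{n-1}$, is left at the level of intentions precisely at the points you flag as delicate. (i) Positivity of the denominator: the case $\tfrac{\delta}{n}-\tfrac{\gamma\mu}{\eta+\gamma}<0$ that you plan to handle by bounding $|m_{\gamma}|$ never occurs, since $\tfrac{\delta}{n}-\tfrac{\gamma\mu}{\eta+\gamma}=\tfrac{\gamma\sigma^2n(n-1)+2\delta\eta}{2n(\eta+\gamma)}>0$ and likewise $\tfrac{\delta}{n}+\tfrac{\gamma b}{\delta+\gamma}=\tfrac{\gamma\sigma^2n^2+2\delta^2}{2n(\delta+\gamma)}>0$ (this is \eqref{aux-4-new.1}); you neither prove this nor complete the alternative estimate. (ii) For $a_{\gamma}>1$ the only nontrivial case is $b<0$, and there the resolution is a one-line identity you miss: $\tfrac{b}{\delta+\gamma}+\tfrac{\mu}{\eta+\gamma}=\tfrac{\frac12\sigma^2(\eta+\gamma)+\mu^2}{(\delta+\gamma)(\eta+\gamma)}>0$ (use $b=\tfrac12\sigma^2-\mu$, $\delta=\eta+\mu$), so the factor multiplying $-m_{\gamma}$ in $N-D$ is always positive and no case split is needed (\eqref{aux-4-new.2}); your fallback substitution $bm_{\gamma}=\tfrac12\sigma^2m_{\gamma}^2-(\delta+\gamma)$ introduces a competing negative term and you do not show the resulting expression closes. (iii) For $a_{\gamma}<\tfrac{n}{n-1}$ you only conjecture that, after eliminating $\gamma$ in favour of $m_{\gamma}$ and $\delta$ in favour of $n$, the inequality factors as something like $(n-m_{\gamma})^2\cdot(\text{positive})$; this is not carried out, and it is not the paper's route: the paper keeps $\gamma$ and splits $\tfrac{n}{n-1}D-N$ into two manifestly positive pieces, namely $\tfrac{n\eta}{(n-1)(\eta+\gamma)}-\tfrac{\delta}{\delta+\gamma}>0$ (because $\delta-n\mu=\tfrac12\sigma^2n(n-1)>0$) and $-\tfrac{m_{\gamma}}{\delta+\gamma}\bigl(\tfrac{\delta}{n-1}-\tfrac{\delta}{n}-\tfrac{\gamma b}{\delta+\gamma}-\tfrac{\gamma\mu n}{(n-1)(\eta+\gamma)}\bigr)>0$, the bracket being shown positive by another application of \eqref{a1.1} (see \eqref{aux-2-new}--\eqref{aux-4-new}). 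Until (i)--(iii) are actually executed, the strict bounds --- which are what Proposition \ref{promax1} later relies on --- remain unproved in your argument.
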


\begin{proposition}\label{promax1}
Let $v$ be as in \eqref{HJBg1}. Then, for each $(x,y)\in\mathbb{R}_+\times\mathbb{R}_+$  the following identity holds
\begin{align}\label{c1}
\max_{0\leq l\leq y}G(x,y,l;v)&=G\biggr(x,y,\biggl[y\wedge \dfrac{1}{\lambda}\ln(x/F_{\gamma})^+\biggr];v\biggl),
\end{align}
with $G$ defined as in \eqref{maxop1}.
\end{proposition}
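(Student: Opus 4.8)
The plan is to turn the maximization over $l$ into a one-dimensional calculus exercise. Since $v(x,y)$ is independent of $l$, maximizing $l\mapsto G(x,y,l;v)$ over $[0,y]$ is equivalent to maximizing
\[
g(l)\eqdef v\bigl(xe^{-\lambda l},\,y-l\bigr)+\tfrac1\lambda\bigl(1-e^{-\lambda l}\bigr)x-C_{\sell}\,l,
\]
and \eqref{c1} asserts precisely that the maximizer is $l^{\star}\eqdef y\wedge\tfrac1\lambda\ln(x/F_{\gamma})^{+}$. The key structural remark is that along the curve $l\mapsto(xe^{-\lambda l},y-l)$ the first coordinate is strictly decreasing, while $\dfrac{xe^{-\lambda l}}{F_{\gamma}e^{\lambda(y-l)}}=\dfrac{x}{F_{\gamma}e^{\lambda y}}$ does not depend on $l$. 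Hence this curve stays in the waiting region $\{x<F_{\gamma}\}$ if $(x,y)$ lies there; stays in the liquidation region $\{x\ge F_{\gamma}e^{\lambda y}\}$ if $(x,y)$ lies there; and, when $F_{\gamma}\le x<F_{\gamma}e^{\lambda y}$, lies in the intermediate region for $l<\mathbb{Y}(x)$ and in the waiting region for $l>\mathbb{Y}(x)$, where $\mathbb{Y}(x)=\tfrac1\lambda\ln(x/F_{\gamma})\in(0,y)$. Correspondingly $l^{\star}$ equals $0$, $y$, and $\mathbb{Y}(x)$ in these three situations, while the case $y=0$ is trivial.

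Next I would differentiate $g$, which is legitimate since $v\in\hol^{2,1}(\R_{+}\times\R_{+})$: writing $\tilde x=xe^{-\lambda l}$ and $\tilde y=y-l$,
\[
g'(l)=-\lambda\tilde x\,v_{x}(\tilde x,\tilde y)-v_{y}(\tilde x,\tilde y)+\tilde x-C_{\sell}.
\]
Plugging in the three branches of \eqref{HJBg1} one checks that the $\tilde y$-dependence cancels in each region, leaving $g'(l)=\phi(\tilde x)$ on the waiting region, with
\[
\phi(t)\eqdef t-C_{\sell}-\frac{F_{\gamma}-C_{\sell}}{F_{\gamma}^{\,n}}\,t^{\,n},
\]
and $g'(l)=\psi(\tilde x)$ on the other two regions, with
\[
\psi(t)\eqdef-\lambda A_{\gamma}m_{\gamma}\Bigl(\tfrac{t}{F_{\gamma}}\Bigr)^{m_{\gamma}}+\frac{\eta}{\eta+\gamma}\,t-\frac{\delta C_{\sell}}{\delta+\gamma}.
\]
Because $v\in\hol^{1}$, $g'$ is continuous, so $\phi$ and $\psi$ must agree at $t=F_{\gamma}$; in fact both vanish there — by inspection for $\phi$, and for $\psi$ because grouping the $F_{\gamma}$- and $C_{\sell}$-terms of $\psi(F_{\gamma})$ produces $F_{\gamma}D-N$, with $D$ and $N$ the denominator and numerator in \eqref{F1}, which is $0$ by the very definition of $F_{\gamma}$. (This cancellation is the smooth-fit condition already built into Proposition \ref{Tcons1}.)

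The heart of the argument is the sign analysis of $\phi$ and $\psi$. On $(0,F_{\gamma})$ the function $\phi$ is strictly increasing: $\phi'(t)=1-n(F_{\gamma}-C_{\sell})F_{\gamma}^{-n}t^{n-1}$ is decreasing (using $n>1$ and $F_{\gamma}>C_{\sell}$, the latter because $F_{\gamma}=a_{\gamma}C_{\sell}$ with $a_{\gamma}>1$ by Lemma \ref{ac1}), and $\phi'(F_{\gamma})=1-n(1-C_{\sell}/F_{\gamma})>0$ since $a_{\gamma}<n/(n-1)$, again by Lemma \ref{ac1}; hence $\phi<\phi(F_{\gamma})=0$ on $(0,F_{\gamma})$. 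On $[F_{\gamma},\infty)$ I claim $\psi\ge0$. For this the crucial fact is $A_{\gamma}<0$: unwinding \eqref{Cg1} with \eqref{F1} shows $\lambda(\delta+\gamma)A_{\gamma}$ is a positive multiple of $n\mu-\delta$, and $n\mu<\delta$ because the positive root $n$ of \eqref{pos1} lies in $(1,\delta/\mu)$ — indeed the quadratic in \eqref{pos1} equals $\mu-\delta<0$ at $l=1$ and is positive at $l=\delta/\mu$. (In the degenerate case $\sigma=0$ one has $A_{\gamma}=0$ and $\psi$ is affine, with the same conclusion.) Since $m_{\gamma}<0$ and $A_{\gamma}<0$, $\psi'(t)=-\lambda A_{\gamma}m_{\gamma}^{2}F_{\gamma}^{-1}(t/F_{\gamma})^{m_{\gamma}-1}+\eta/(\eta+\gamma)>\eta/(\eta+\gamma)>0$, so $\psi$ is strictly increasing and hence $\psi\ge\psi(F_{\gamma})=0$ on $[F_{\gamma},\infty)$.

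Assembling the three cases finishes the proof: if $x<F_{\gamma}$ then $g'=\phi<0$ on $[0,y)$, so $g$ is maximized at $l=0=l^{\star}$; if $x\ge F_{\gamma}e^{\lambda y}$ then $g'=\psi\ge0$ on $[0,y)$, so $g$ is maximized at $l=y=l^{\star}$; and if $F_{\gamma}\le x<F_{\gamma}e^{\lambda y}$ then $g'=\psi\ge0$ on $[0,\mathbb{Y}(x)]$ and $g'=\phi<0$ on $[\mathbb{Y}(x),y]$, so $g$ is maximized at $l=\mathbb{Y}(x)=l^{\star}$, while $y=0$ is immediate. The step I expect to be the genuine obstacle is establishing $A_{\gamma}<0$ (equivalently, the monotonicity of $\psi$ on $[F_{\gamma},\infty)$): it is the only place requiring one to really open up the definitions \eqref{F1}–\eqref{Cg1} and use the location of the root $n$ of \eqref{pos1}, whereas the computation of $g'$ and the smooth-fit cancellation are routine given Proposition \ref{Tcons1}.
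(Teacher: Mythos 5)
Your proposal is correct, and it reaches the conclusion by a genuinely different organization than the paper's proof. The paper argues zone by zone: for $x<F_{\gamma}$ it reduces $G_l<0$ to the ratio inequality \eqref{c4}, then to $(\expo^{-\lambda l}-\expo^{-\lambda nl})a_{\gamma}<1-\expo^{-\lambda nl}$, proved by maximizing over $l$ at $l^{*}=\ln n/(\lambda(n-1))$ plus a contradiction argument; in the middle zone it only verifies the first- and second-order conditions $G_l(x,y,\mathbb{Y}(x);v)=0$ and $G_{ll}(x,y,\mathbb{Y}(x);v)<0$ (see \eqref{eq12}--\eqref{eq14.2}); and in the third zone it bounds $G_l$ from below by a boundary comparison using $A_{\gamma}<0$, which it obtains from \eqref{Cg1} together with \eqref{eq14.2}. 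Your route instead exploits the invariance of $x\expo^{-\lambda l}/(F_{\gamma}\expo^{\lambda(y-l)})$ along the selling curve, observes (correctly — the $\tilde y$-dependence does cancel, and the middle and third branches of \eqref{HJBg1} produce the \emph{same} derivative expression) that $g'$ is either $\phi(\tilde x)$ or $\psi(\tilde x)$, and then does a global sign analysis: $\phi$ increasing on $(0,F_{\gamma}]$ with $\phi(F_{\gamma})=0$ (using $1<a_{\gamma}<n/(n-1)$ from Lemma \ref{ac1}), and $\psi$ strictly increasing with $\psi(F_{\gamma})=0$ (the smooth-fit identity \eqref{fit3}, equivalently the definition \eqref{F1}). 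This buys two things: it replaces the paper's contradiction argument in the first zone by a one-line monotonicity check, and, more importantly, in the middle zone it certifies a \emph{global} maximizer on $[0,y]$, whereas the paper's local second-order condition at $l=\mathbb{Y}(x)$ by itself only gives a local maximum — your piecewise sign argument is what actually closes that step. Your shortcut to $A_{\gamma}<0$ is also valid: unwinding \eqref{Cg1} with \eqref{F1} gives $\lambda(\delta+\gamma)A_{\gamma}=C_{\sell}\,h(\gamma)/\bigl((\delta+\gamma)D\bigr)$ with $D$ the denominator of \eqref{F1} and $h(\gamma)=\tfrac{\gamma\sigma^{2}}{2(\eta+\gamma)}(n\mu-\delta)$, so the sign is that of $n\mu-\delta<0$; just note that this "positive multiple" requires $D>0$, which is not in the statement of Lemma \ref{ac1} but is established in its proof (inequality \eqref{aux-4-new.3}), so cite it explicitly. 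Finally, drop or rephrase the $\sigma=0$ aside: in that case \eqref{neg1} has no negative root, so $m_{\gamma}$ and hence $v$ in \eqref{HJBg1} are not defined, and the whole construction (as in the paper) implicitly assumes $\sigma\neq0$.
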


\begin{remark}
 Note that putting together Propositions \ref{Tcons1} and \ref{promax1}, it follows immediately that $v$ is a solution to the HJB equation \eqref{HJBP}.
\end{remark}

Next result  identifies the solution of the HJB equation \eqref{HJBg1} with the value function $u$, providing also an optimal strategy within the set $\mathcal{A}^{\sell}(y)$.

\begin{theorem}[Verification Theorem]\label{LemVer1}
Consider the periodic optimal execution problem formulated in Section 2 and the function $v$ defined by \eqref{HJBg1}. Then, $v$ agrees with the value function $u$ of the periodic stochastic control problem. In particular,
\begin{align*}
u(x,y)=\sup_{\xi^{\sell}\in\mathcal{A}^{\sell}(y)}J_{x,y}(\xi^{\sell},0)=v(x,y)\qquad\text{for all $(x,y)\in\mathbb{R}_+\times\mathbb{R}_+$.}
\end{align*}
Furthermore,  define the strategy $\xi^{\sell*}_{t}=\displaystyle\int_{0}^{t}\nu_{s}^*d N^{\gamma}_{s}$ with
\begin{equation}\label{optesbar1}
\nu_{T_i}^*=Y_{T_i}^*\wedge\frac{1}{\lambda}\left(\ln X_{T_i}^*-\ln F_{\gamma}\right)^+,
\end{equation}
where $F_{\gamma}$ is as in \eqref{F1}, $\mathcal{T}=\{T_i\}_{i=1}^{\infty}$ is the set of arrival times of the Poisson process $N^{\gamma}$, and $(X^* ,Y^*)$ is the state process   associated with the liquidation strategy $\xi^{\sell*}$. Then, the following statements hold
\begin{itemize}
\item [(i)] If $\mu-\frac{1}{2}\sigma^2\geq 0$, then $\xi^{\sell*}$ is an optimal periodic liquidation strategy.
\item [(ii)] If $\mu-\frac{1}{2}\sigma^2< 0$, then $\xi^{\sell*}$ is not an optimal periodic liquidation strategy. So if we define 
\begin{equation}
\label{epsilon-optimal}
\xi^{\sell*j}_t=\xi^{\sell*}_t1_{\{t\leq j\}}+y 1_{\{j< t\}},\qquad \text{ for $t>0$ and $j\geq 1$,}
\end{equation}
then $\{\xi^{\sell*j}\}_{j=1}^{\infty}$ is a sequence of $\varepsilon$-optimal periodic strategies.
\end{itemize}

\end{theorem}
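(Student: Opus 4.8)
The plan is to prove this by a Hamilton–Jacobi–Bellman verification argument, using that $v\in\hol^{2,1}(\R_+\times\R_+)$ solves \eqref{HJBP} (Propositions \ref{Tcons1} and \ref{promax1}), and then splitting according to whether the candidate strategy $\xi^{\sell*}$ is admissible. The backbone is a \emph{fundamental identity}. For $\xi^{\sell}\in\mathcal{A}^{\sell}(y)$ with state process $(X,Y)$ — note $Y_t=y-\xi^{\sell}_t\in[0,y]$, hence $X_t=X^0_t\expo^{-\lambda\xi^{\sell}_t}\le X^0_t$ — I would apply Itô's formula to $\expo^{-\delta t}v(X_t,Y_t)$, with $X$ the geometric Brownian motion \eqref{GBM} between the jumps of $N^{\gamma}$ and $Y$ piecewise constant, rewrite each jump increment $v(X_{T_i}\expo^{-\lambda\nu_{T_i}},Y_{T_i}-\nu_{T_i})-v(X_{T_i},Y_{T_i})$ through the definition \eqref{maxop1} of $G$, and compensate the Poisson integral, obtaining
\begin{multline*}
\E_x\Big[\int_0^t\expo^{-\delta s}\big(X_s\circ_{\sell}\diff\xi^{\sell}_s-C_{\sell}\diff\xi^{\sell}_s\big)\Big]+\E_x\big[\expo^{-\delta t}v(X_t,Y_t)\big]\\
=v(x,y)+\E_x\Big[\int_0^t\expo^{-\delta s}\big(\mathcal{L}v(X_s,Y_s)+\gamma\, G(X_s,Y_s,\nu_s;v)\big)\diff s\Big],
\end{multline*}
where $\nu_s$ denotes the (predictable) selling decision at time $s$. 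The $\diff W$ and compensated-$\diff N^{\gamma}$ terms are genuine martingales on $[0,t]$ because, reading \eqref{HJBg1}, $|v(x,y)|+x\,|v_x(x,y)|\le K_{y_0}(1+x^{n})$ on $\{y\le y_0\}$ (the term $x^{m_{\gamma}}$ is bounded on $x\ge F_{\gamma}$ since $m_{\gamma}<0$), while $s\mapsto\E_x[(X^0_s)^{2n}]$ is finite and locally bounded. By \eqref{HJBP} the integrand in the last line is $\le 0$, and it equals $0$ along $\xi^{\sell*}$, for which Proposition \ref{promax1} identifies $\nu^*_{T_i}=Y^*_{T_i}\wedge\tfrac1\lambda(\ln X^*_{T_i}-\ln F_{\gamma})^+$ as the maximiser in $\max_{0\le l\le Y^*_{T_i}}G(X^*_{T_i},Y^*_{T_i},l;v)$.

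From this I would extract the inequality $u\le v$. Each branch of \eqref{HJBg1} is dominated by an affine function of $x$ on $\{y\le y_0\}$ (on $x<F_{\gamma}$ and on the intermediate region the variable itself is bounded; on the full region $x^{m_{\gamma}}$ is bounded and the remaining terms are affine), so $|v(X_t,Y_t)|\le K_{y_0}(1+X^0_t)$ and, since $\delta>\mu$, $\E_x[\expo^{-\delta t}v(X_t,Y_t)]\to 0$ as $t\to\infty$. Simultaneously the gain integral converges to $J_{x,y}(\xi^{\sell},0)$ by dominated convergence, a dominating integrable random variable being $\tfrac1\lambda\int_0^\infty\expo^{-\delta s}X_s\,\diff N^{\gamma}_s+C_{\sell}y$ (of mean $\le\tfrac{\gamma x}{\lambda(\delta-\mu)}+C_{\sell}y$, using $\E_x[X_s]\le x\expo^{\mu s}$). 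Letting $t\to\infty$ in the identity gives $J_{x,y}(\xi^{\sell},0)\le v(x,y)$, hence $u\le v$; along $\xi^{\sell*}$ the inequality becomes $J_{x,y}(\xi^{\sell*},0)=v(x,y)$, understood as a well-defined Lebesgue–Stieltjes integral.

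It remains to decide when $\xi^{\sell*}\in\mathcal{A}^{\sell}(y)$. If $\mu-\tfrac12\sigma^2\ge 0$, then $\log X^0_t=\log x+(\mu-\tfrac12\sigma^2)t+\sigma W_t$ has non-negative drift, so $\limsup_{t\to\infty}X^0_t=\infty$ a.s. (either $X^0_t\to\infty$, or $X^0$ is recurrent); since $\{T_i\}$ is independent of $W$ and $\int_0^\infty\I{X^0_t>M}\diff t=\infty$ a.s. for every $M$, a conditional Borel–Cantelli argument yields $\limsup_{i}X^0_{T_i}=\infty$ a.s. As $X^*_{T_i}\ge X^0_{T_i}\expo^{-\lambda y}$, a.s. some arrival satisfies $X^*_{T_i}>F_{\gamma}\expo^{\lambda Y^*_{T_i}}$, at which $\xi^{\sell*}$ liquidates the remaining position; hence $Y^*_t\to 0$ in finite random time, $\xi^{\sell*}\in\mathcal{A}^{\sell}(y)$, and with the preceding paragraph $u=v$ and $\xi^{\sell*}$ is optimal, which is (i). If instead $\mu-\tfrac12\sigma^2<0$, then $X^0_t\to 0$ a.s.; after the last (finite, random) time $X^0$ exits $(0,F_{\gamma})$ no further selling occurs under $\xi^{\sell*}$, and with positive probability this happens before $T_1$, so $Y^*_t\equiv y$ and $\lim_tY^*_t\ne 0$: thus $\xi^{\sell*}\notin\mathcal{A}^{\sell}(y)$ and it is not an optimal (not even a liquidating) strategy. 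For (ii), note $\xi^{\sell*j}$ coincides with $\xi^{\sell*}$ on $[0,j]$ and then sells the remaining $Y^*_j$ shares, so $\xi^{\sell*j}\in\mathcal{A}^{\sell}(y)$ and, applying the fundamental identity at $t=j$ (an equality for $\xi^{\sell*}$),
\begin{equation*}
J_{x,y}(\xi^{\sell*j},0)=v(x,y)-\E_x\big[\expo^{-\delta j}v(X^*_j,Y^*_j)\big]+\E_x\Big[\expo^{-\delta j}\big(\tfrac1\lambda X^*_j(1-\expo^{-\lambda Y^*_j})-C_{\sell}Y^*_j\big)\Big];
\end{equation*}
both correction terms are bounded in absolute value by a constant times $\E_x[\expo^{-\delta j}(1+X^0_j)]\to 0$, so $J_{x,y}(\xi^{\sell*j},0)\to v(x,y)$, which together with $J_{x,y}(\xi^{\sell*j},0)\le u\le v$ gives $u=v$ and shows $\{\xi^{\sell*j}\}_{j\ge1}$ is a sequence of $\varepsilon$-optimal strategies.

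The routine parts are the Itô computation and the dominated-convergence bookkeeping, both controlled by the explicit affine-in-$x$ growth of $v$ combined with $\delta>\mu$. The delicate point — and the source of the dichotomy — is the path analysis: proving $\limsup_{i\to\infty}X^0_{T_i}=\infty$ when $\mu\ge\tfrac12\sigma^2$, especially in the borderline recurrent case $\mu=\tfrac12\sigma^2$, requires coupling the excursions of the geometric Brownian motion with the independent Poisson clock; symmetrically, one must verify that when $\mu<\tfrac12\sigma^2$ the event ``$\xi^{\sell*}$ never sells'' has positive probability, which is precisely what obstructs optimality of $\xi^{\sell*}$ itself and forces the truncation $\xi^{\sell*j}$.
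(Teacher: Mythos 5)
Your proposal is correct and follows essentially the same verification route as the paper: Itô's formula plus the compensated Poisson integral, the HJB inequality from \eqref{HJBP} (with equality along the barrier control by Proposition \ref{promax1}), transversality via $\delta>\mu$ and the linear growth of $v$, the admissibility dichotomy driven by the sign of $\mu-\tfrac{1}{2}\sigma^2$, and the truncated strategies $\xi^{\sell*j}$ for $\varepsilon$-optimality. The only differences are technical refinements (moment bounds in place of the localizing times $\tau_m$, an explicit dominating variable instead of monotone convergence, a spelled-out Borel--Cantelli argument for $\limsup_i X^0_{T_i}=\infty$, and booking the final block sale at time $j$ rather than at $\tilde{\tau}_j$), none of which changes the structure of the argument.
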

 \begin{proof}\label{ver_lemma}
 Take  $(x,y)\in\R_{+}^{*}\times\R_+$ an initial condition, $\xi^{\sell}\in \mathcal{A}^{\sell}(y)$, described by the following selling strategy $\Theta=\{\nu_{T_1},\nu_{T_2},\dots\}$, where $\mathcal{T}=\{T_1,T_2,\dots\}$ is the set of selling times, and  $(\tau_{m})_{m\in\mathbb{N}}$ the sequence of stopping times defined by $\tau_m :=\inf\{t>0:X_t > m\}$. Using It\^o-Tanaka-Meyer's  formula and the left continuity of the processes $X$ and $Y$, we can see 
\begin{align*}
 \expo^{-\delta (t\wedge \tau_{m})}v(X_{t\wedge \tau_{m}},Y_{t\wedge \tau_{m}})&=v(x,y)+\int_0^{t\wedge \tau_{m}}\expo^{-\delta s}\mathcal{L}v(X_s,Y_s)\diff s+M_{t\wedge \tau_{m}}\\
 &\quad+\sum_{0\leq s\leq t\wedge \tau_{m}}\expo^{-\delta s}[v(X_{s+},Y_{s+})-v(X_s,Y_s)],
 \end{align*}
 where  $M_{t\wedge \tau_{m}}\eqdef\sigma\displaystyle\int_0^{t\wedge \tau_{m}}\expo^{-\delta s}X_sw_x(X_s,Y_s)\diff W_s$. On the other hand
 \begin{align*}
 &\sum_{0\leq s\leq t\wedge \tau_{m}}\expo^{-\delta s}[v(X_{s+},Y_{s+})-v(X_s,Y_s)]\notag\\
 &\quad=\int_0^{t\wedge \tau_{m}}\expo^{-\delta s}\left[v(X_{s-}\expo^{-\lambda \nu_s},Y_{s-}-\nu_s)-v(X_{s-},Y_{s-})\right]\diff N^{\gamma}_{s}\notag\\
 &\quad=\int_0^{t\wedge \tau_{m}}\expo^{-\delta s} G(X_{s-},Y_{s-},\nu_{s};v)\diff N^{\gamma}_{s}-\int_0^{t\wedge \tau_{m}}\expo^{-\delta s}\left[\dfrac{1}{\lambda}(1-\expo^{-\lambda \nu_s})X_{s-}-C_{\sell}\nu_s\right]\diff N^{\gamma}_{s}\notag\\
 &\quad=H_{t\wedge \tau_{m}}+\int_0^{t\wedge \tau_{m}}\gamma \expo^{-\delta s} G(X_{s-},Y_{s-},\nu_{s};v)\diff s-\int_0^{t\wedge \tau_{m}} \expo^{-\delta s}\left[\dfrac{1}{\lambda}(1-\expo^{-\lambda \nu_s})X_{s-}-C_{\sell}\nu_s\right]\diff N^{\gamma}_{s},
 \end{align*}
 with $G$ as in \eqref{c1}, $H_{t\wedge \tau_{m}}:=\displaystyle\int_0^{t\wedge \tau_{m}}\expo^{-\delta s} G(X_{s-},Y_{s-},\nu_{s};v)\diff\widetilde{N}^{\gamma}_{s}$ and $\widetilde{N}$ is the compensated Poisson process. Hence, putting these pieces together and observing that
\begin{equation*}
 \mathcal{L}v(X_s,Y_s)+\gamma G(X_{s-},Y_{s-},\nu_{s};v)\leq \mathcal{L}v(X_s,Y_s)+\max_{0\leq l\leq y}\{\gamma G(X_{s-},Y_{s-},l;v)\}=0,
\end{equation*}
we have
 \begin{align}\label{aux_1}
 \expo^{-\delta (t\wedge \tau_{m})}v(X_{t\wedge \tau_{m}},Y_{t\wedge \tau_{m}})&=v(x,y)+M_{t\wedge \tau_{m}}+H_{t\wedge \tau_{m}}\\
 &\quad-\int_0^{t\wedge \tau_{m}}\expo^{-\delta s}\left[\dfrac{1}{\lambda}(1-\expo^{-\lambda \nu_s})X_{s-}-C_{\sell}\nu_s\right]\diff N^{\gamma}_{s}\notag\\
 &\quad+\int_0^{t\wedge \tau_{m}} \expo^{-\delta s}[\mathcal{L}v(X_s,Y_s)+\gamma G(X_{s-},Y_{s-},\nu_{s};v)]\diff s\notag\\
 &\leq v(x,y)+M_{t\wedge \tau_{m}}+H_{t\wedge \tau_{m}}\notag\\
 &\quad-\int_0^{t\wedge \tau_{m}}\expo^{-\delta s}\biggr[\dfrac{1}{\lambda}(1-\expo^{-\lambda \nu_s})X_{s-}-C_{\sell}\nu_s\biggl]\diff N^{\gamma}_{s}.\notag
 \end{align}
 From \eqref{HJBg1} it is not difficult to see that there exists a positive constant $K$ such that,
 \begin{equation}\label{growth}
 |v(x,y)|\leq K(1+x+y), \text{for all $(x,y)\in\mathbb{R}_+\times\mathbb{R}_+$,}
 \end{equation}
 hence it follows that the processes $(M_{t \wedge \tau_{m}}; t\geq0 )$ and $(H_{t \wedge \tau_{m}}; t\geq0 )$ are zero-mean $\mathbb{P}$-martingales. Then, taking expectations in \eqref{aux_1} we obtain
 \begin{align}\label{HJbdes2}
 v(x,y)&\geq \E_{x}\left[\expo^{-\delta (t\wedge \tau_{m})}v(X_{t\wedge \tau_{m}},Y_{t\wedge \tau_{m}})\right]\\
 &\quad+\E_{x}\left[\int_0^{t\wedge \tau_{m}}\expo^{-\delta s}\left[\dfrac{1}{\lambda}(1-\expo^{-\lambda \nu_s})X_{s-}-C_{\sell}\nu_s\right]\diff N^{\gamma}_{s}\right].\notag
 \end{align}
 Now, from  the expression of the process $X_t$ in (\ref{priceP}) and recalling that $\delta>\mu$, we have that
 \begin{align}\label{HJbdes3}
 \lim_{t,m\longrightarrow\infty} \E_{x}[\expo^{-\delta (t\wedge \tau_{m})}v(X_{t\wedge \tau_{m}},Y_{t\wedge \tau_{m}})]&\leq \lim_{t,m\longrightarrow\infty} \E_{x}[\expo^{-\delta (t\wedge \tau_{m})}K(1+X_{t\wedge \tau_{m}}+Y_{t\wedge \tau_{m}})]\\
 &\leq \lim_{t,m\longrightarrow\infty} \E_{x}[\expo^{-\delta (t\wedge \tau_{m})}K(1+X_{t\wedge \tau_{m}}+y)]=0.\notag
 \end{align}
Using \eqref{sell_cond} we note the following
 \begin{equation}\label{int_cond}
 \int_0^{\infty}\nu_s\diff N^{\gamma}_{s}=\lim_{t\to\infty} \xi^{\sell}_t\leq y.
 \end{equation}
 Then, letting $m,t\rightarrow\infty$ in \eqref{HJbdes2}, using \eqref{HJbdes3}, \eqref{int_cond} and Monotone Convergence Theorem, we obtain that
 \begin{align*}
 v(x,y)&\geq \E_{x}\left[\int_0^{\infty}\expo^{-\delta s}\left[\dfrac{1}{\lambda}(1-\expo^{-\lambda \nu_s})X_{s-}-C_{\sell}\nu_s\right]\diff N^{\gamma}_{s}\right]\\
 &=\E_{x}\left[\int_{0}^{\infty}\expo^{-\delta  s}\left[X_s\circ_{\sell}\diff \xi^{\sell}_s-C_{\sell}\diff\xi_{s}^{\sell}\right]\right].
 \end{align*}
 Taking the maximum over all $\xi^{\sell}\in\mathcal{A}^{\sell}(y)$ we conclude that $u(x,y)\leq v(x,y)$. Let $(X^* ,Y^*)$ be the state process   associated with the liquidation strategy $\xi^{\sell*}$, given by $\xi^{\sell*}_t=\displaystyle\int_0^t\nu^*_{s}\diff N^{\gamma}_{s}$, with $\nu^*$ as in \eqref{optesbar1}. Note that $\xi^{\sell}$ is admissible as long as $\displaystyle\lim_{t\to\infty}\xi^{\sell}_t=y$. We can easily check using \eqref{priceP} that this indeed the case if and only if $\mu-\frac{1}{2}\sigma^2\geq 0$ since $\displaystyle\limsup_{ t\rightarrow\infty} X_t^0=\infty$. Proceeding in a similar way as in \eqref{aux_1}, we get that
 \begin{align}\label{aux_4}
 \expo^{-\delta (t\wedge \bar{\tau}_{m})}v(X_{t\wedge \bar{\tau}_{m}}^*,Y_{t\wedge \bar{\tau}_{m}}^*)&=v(x,y)+M_{t\wedge \bar{\tau}_{m}}^*+H_{t\wedge \bar{\tau}_{m}}^*\\
 &\quad-\int_0^{t\wedge \bar{\tau}_{m}}\expo^{-\delta s}\left[\dfrac{1}{\lambda}(1-\expo^{-\lambda \nu_s})X_{s-}^*-C_{\sell}\nu_s\right]\diff N^{\gamma}_{s}\notag\\
 &\quad+\int_0^{t\wedge \bar{\tau}_{m}} \expo^{-\delta s}[\mathcal{L}v(X_s^*,Y_s^*)ds+\gamma G(X_{s-}^*,Y_{s-}^*,\nu_{s}^*;v)]\diff s,\notag
 \end{align}
 where $\bar{\tau}_{m}:=\inf\{t>0:X^{*}_{t}>m\}$. Now, from the construction of $v$, we know that it is the solution to \eqref{HJBin2}. Therefore, we have that
 	\[
\int_0^{t\wedge \bar{\tau}_{m}} \expo^{-\delta s}[\mathcal{L}v(X_s^*,Y_s^*)ds+\gamma G(X_{s-}^*,Y_{s-}^*,\nu_{s}^*;v)]\diff s=0.
 	\]
 Hence, we obtain that
 \begin{align*}
 \expo^{-\delta (t\wedge \bar{\tau}_{m})}v(X_{t\wedge \bar{\tau}_{m}}^*,Y_{t\wedge\bar{\tau}_{m}}^*)&=v(x,y)+M_{t\wedge \bar{\tau}_{m}}^*+H_{t\wedge \bar{\tau}_{m}}^*\notag\\
 &\quad-\int_0^{t\wedge \bar{\tau}_{m}}\expo^{-\delta s}\left[\dfrac{1}{\lambda}(1-\expo^{-\lambda \nu_s^{*}})X_{s-}^*-C_{\sell}\nu_s^*\right]\diff N^{\gamma}_{s}.
 \end{align*}
 Then, taking expectations in the previous identity it follows that
 \begin{align}\label{aux-5}
 v(x,y)&=\mathbb{E}_{x}\left[\expo^{-\delta (t\wedge \bar{\tau}_{m})}v(X_{t\wedge \tau_{m}}^*,Y_{t\wedge \bar{\tau}_{m}}^*)\right]\notag\\
 &\quad+\mathbb{E}_{x}\left[\int_0^{t\wedge \bar{\tau}_{m}}\expo^{-\delta s}\left[\dfrac{1}{\lambda}(1-\expo^{-\lambda \nu_s^{*}})X_{s-}^*-C_{\sell}\nu_s^*\right]\diff N^{\gamma}_{s}\right].
 \end{align}
Now, letting $m,t\rightarrow\infty$ in \eqref{aux-5}, using \eqref{HJbdes3}, \eqref{int_cond} and Monotone Convergence Theorem, we get
 \begin{align*}
 v(x,y)\leq \mathbb{E}_{x}\left[\int_0^{\infty}\expo^{-\delta s}\left[\dfrac{1}{\lambda}(1-\expo^{-\lambda \nu_s{*}})X_{s-}^*-C_{\sell}\nu_s^*\right]\diff N^{\gamma}_{s}\right]=J_{x,y}(\xi^{\sell*},0)\leq u(x,y),
 \end{align*}
 which implies the result. For the case when $\mu-\frac{1}{2}\sigma^2<0$, let us take $(\overline{X}^{*},\overline{Y}^{*})$ as the state process associated with the strategy  $\xi^{\sell*j}$ given by \eqref{epsilon-optimal}. We can check that $\xi^{\sell*j}$ has payoff
 \begin{align}\label{aux-7}
 J_{x,y}(\xi^{\sell*j},0)=\E_{x}\left[\int_0^j\expo^{-\delta t}[\overline{X}_t^*\circ_{\sell} \diff\xi^{\sell*}_t-C_{\sell}\diff \xi^{\sell*}_t]\right]+\frac{1}{\lambda}\E_{x}\left[\overline{X}^*_{\tilde{\tau}_j}\left[1-\expo^{-\lambda(y-\overline{Y}_{\tilde{\tau}_j}^*)}\right]\right],
 \end{align}
 where $\tilde{\tau}_j=\inf\{T_i>0: T_i>j\}$. Also, we see that \eqref{aux-5} is satisfied, if $t=j$, $(X^{*},Y^{*})$ is replaced by $(\overline{X}^{*},\overline{Y}^{*})$, and $\tau_m$ with  the stopping time $\bar{\tau}_{m}=\inf\{t>0:\overline{X}^{*}_{t}>m\}$. Then, letting $m\to\infty$ in \eqref{aux-5}, it follows
 \begin{align}\label{aux-6}
 \E_{x}\left[\int_0^j\expo^{-\delta t}[\overline{X}_t^*\circ_{\sell} \diff\xi^{\sell*}_t-C_{\sell}\diff\xi^{\sell*}_t]\right]=v(x,y)-\mathbb{E}_{x}\left[\expo^{-\delta j}v(\overline{X}_{j}^*,\overline{Y}_{j}^*)\right].
\end{align}
Now, applying \eqref{aux-6} in \eqref{aux-7}, 
\begin{align*}
J_{x,y}(\xi^{\sell*j},0)=v(x,y)-\mathbb{E}_{x}\left[\expo^{-\delta j}v(\overline{X}_{j}^*,\overline{Y}_{j}^*)\right]+\frac{1}{\lambda}\E_{\blue{x}}\left[\overline{X}^*_{\tilde{\tau}_j}\left[1-\expo^{-\lambda(y-\overline{Y}_{\tilde{\tau}_j}^*)}\right]\right].
\end{align*}
Therefore noting that the right-hand side of this expression converges to $v(x,y)$ as $j\to\infty$ allow us to establish that $\{\xi^{\sell*j}\}_{j=1}^{\infty}$ is a sequence of $\varepsilon$-optimal strategies.
\end{proof}
\begin{remark}
	Notice that  as a consequence  of Proposition \ref{Tcons1} and Lemma \ref{ac1},   we obtain that
	\begin{align*}
	\lim_{\gamma\to\infty}F_{\gamma}=\frac{nC_{\sell}}{n-1}:=F_{\infty},\quad
	\lim_{\gamma\to\infty}A_{\gamma}=0,\quad \text{and}\quad 
	\lim_{\gamma\to\infty}C_{\gamma}=\frac{F_{\infty}}{\lambda n^2}+\frac{1}{\lambda}\left(C_{\sell}\ln F_{\infty}-F_{\infty}\right).
	\end{align*}
	Recall that $F_{\gamma},C_{\gamma}, A_{\gamma}$ are given in \eqref{F1}--\eqref{Cg2}, respectively. Hence, straightforward computations yield
	\begin{equation*}
	\lim_{\gamma\to\infty}v(x,y)=
	\begin{cases}
	\displaystyle 0, &\text{if}\ y=0\ \text{and}\ x>0,\\
	\displaystyle\dfrac{(1-\expo^{-\lambda ny})x^n}{\lambda n^2 F_{\infty}^{n-1} },& \text{if}\ y>0\ \text{and}\ x<F_{\infty},\\
	\displaystyle \frac{F_{\infty}}{\lambda n^2}\left(1-\left(\frac{x\expo^{-\lambda y}}{F_{\infty}}\right)^n\right)\\
	\hspace{3cm}+\dfrac{x-F_{\infty}}{\lambda}-\dfrac{C_{\sell}}{\lambda}\ln \dfrac{x}{F_{\infty}} , &\text{if}\ y>0\ \text{and}\ F_{\infty}\leq x< F_{\infty}\expo^{\lambda y},\\
	\displaystyle \dfrac{ x(1-\expo^{-\lambda y})}{\lambda}-C_{\sell}y,&\text{if}\ y>0\ \text{and}\  x\geq F_{\infty}\expo^{\lambda y}.
	\end{cases}
	\end{equation*}
	These asymptotic limits allow us to  recover the value function for the case of singular strategies for the optimal execution problem obtained by Guo and Zervos in Proposition 5.1 of \cite{GS}.
\end{remark}

\renewcommand\thesection{A}
\setcounter{equation}{0}
\section*{Appendix. Proofs of some technical results}\label{techn1}
\begin{proof}[Proof of Lemma \ref{sell.1}]
	Let us take $(\xi^{\sell},\xi^{\barrier})\in\mathcal{A}(y)$ and $\Theta^{\sell,\barrier}=\{(\nu^{\sell}_{T_1},\nu^{\sell}_{T_2},\dots),(\nu^{\barrier}_{T_1},\nu^{\barrier}_{T_2},\dots)\}$ the selling-buying strategy associated with $(\xi^{\sell},\xi^{\barrier})$. Let $\mathcal{T}^{\sell}\subset\mathcal{T}$ the subset of decision times whose elements $\kappa_{i}$ are given in the following way 
	\begin{equation*}
		\begin{split}
			\kappa_{1}&\eqdef\inf\{T_{j}\in\mathcal{T}:\nu^{s}_{T_{j}}>0\},\\
			\kappa_{i}&\eqdef\inf\{T_{j}\in\mathcal{T}:T_{j}>\kappa_{i-1}\ \text{and}\ \nu^{s}_{T_{j}}>0\},\ \text{for}\ i\in\{2,3,\dots \}.\\
		\end{split}
	\end{equation*}
	From here, we see
	\begin{equation*}
		\begin{cases}
			\nu_{T_{j}}^{\barrier}=0<\nu_{T_{j}}^{\sell},&\text{if}\ T_{j}=\kappa_{i},\\
			\nu_{T_{j}}^{\sell}=0\leq\nu_{T_{j}}^{\barrier},&\text{if}\ \kappa_{i-1}<T_{j}<\kappa_{i},
		\end{cases}
	\end{equation*}
	for $i\in\{1,2,3,\dots \}$, with $\kappa_{0}=0$. Let $\alpha_{i,j}$ be defined as
	\begin{equation*}
		\alpha_{i}^{j}\eqdef\bigg(\displaystyle\sum_{\iota=i}^{j}\bigg(\nu_{\kappa_{\iota}}^{\sell}-\displaystyle\sum_{k=1}^{\infty}\nu^{\barrier}_{T_{k}}1_{\{\kappa_{\iota-1}<T_{k}\leq \kappa_{\iota}\}}\bigg)\bigg)^{+},\ \text{with}\ i,j\in\{1,2,\dots\}\ \text{and}\ i\leq j.
	\end{equation*}
	We construct $\bar{\nu}^{\sell}$ in the following way:
	\begin{enumerate}[(i)]
		\item If $T_{j}\notin\mathcal{T}^{\sell}$, then  $\bar{\nu}_{T_{j}}^{\sell}=0$.	
		\item If $T_{j}=\kappa_{1}$, then, $\bar{\nu}^{\sell}_{\kappa_{1}}\eqdef\alpha_{1,1}$. 
		
		\item If $T_{j}=\kappa_{2}$,
		\begin{equation*}
			\bar{\nu}^{\sell}_{\kappa_{2}}\eqdef
			\begin{cases}
				\alpha_{2}^{2},&\text{if}\ \bar{\nu}^{\sell}_{\kappa_{1}}>0,\\
				\alpha_{1}^{2},&\text{if}\ \bar{\nu}^{\sell}_{\kappa_{1}}=0.
			\end{cases}
		\end{equation*} 
		
		\item If $T_{j}=\kappa_{3}$,
		\begin{equation*}
			\bar{\nu}^{\sell}_{\kappa_{3}}\eqdef
			\begin{cases}
				\alpha_{3}^{3},&\text{if}\ \bar{\nu}^{\sell}_{\kappa_{1}}\geq0\ \text{and}\ \bar{\nu}^{\sell}_{\kappa_{2}}>0,\\
				\alpha_{2}^{3},&\text{if}\ \bar{\nu}^{\sell}_{1}>0\ \text{and}\ \bar{\nu}^{\sell}_{2}=0,\\
				\alpha_{1}^{3},&\text{if}\ \bar{\nu}^{\sell}_{\kappa_{\rho}}=0,\ \text{for}\ \rho\in\{1,2\}.
			\end{cases}
		\end{equation*}
		
		\item Recursively, if $T_{j}=\kappa_{i}$, with $i\in\{3,4,\dots\}$,
		\begin{equation*}
			\bar{\nu}^{\sell}_{\kappa_{i}}\eqdef
			\begin{cases}
				\alpha_{i}^{i},&\text{if}\ \bar{\nu}^{\sell}_{\kappa_{i-1}}>0\ \text{and}\ \bar{\nu}^{\sell}_{\kappa_{\rho}}\geq0\ \text{for}\ \rho\in\{1,\dots,i-2\}, \\
				\alpha_{i-1}^{i},&\text{if}\ \bar{\nu}^{\sell}_{\kappa_{i-1}}=0,\  \bar{\nu}^{\sell}_{\kappa_{i-2}}>0\ \text{and}\ \bar{\nu}^{\sell}_{\kappa_{\rho}}\geq0 \ \text{for}\ \rho\in\{1,\dots,i-3\},\\
				\quad\vdots&\\
				\alpha_{i-(p-2)}^{i},&\text{if}\ \bar{\nu}^{\sell}_{\kappa_{\rho}}=0\ \text{for}\ \rho\in\{i-(p-2),\dots,i-1\},\ \bar{\nu}^{\sell}_{\kappa_{i-(p-1)}}>0\  \text{and}\\
				&\bar{\nu}^{\sell}_{\kappa_{\rho}}\geq0 \ \text{for}\ \rho\in\{1,\dots,i-p\},\\
				\quad\vdots&\\
				\alpha_{2}^{i},&\text{if}\ \bar{\nu}^{\sell}_{\rho}=0\ \ \text{for}\ \rho\in\{2,\dots,i-1\},\ \text{and}\  \bar{\nu}^{\sell}_{1}>0,\\
				\alpha_{1}^{i},&\text{if}\ \bar{\nu}^{\sell}_{\kappa_{\rho}}=0,\ \text{with}\ \rho\in\{1,\dots,i-1\}.
			\end{cases}
		\end{equation*}
	\end{enumerate}
	Now, we take the selling strategy  $\bar{\xi}^{\sell}$ as
	\begin{equation}\label{s.2}	
	\bar{\xi}^{\sell}_{t}\eqdef\int_{0}^{t}\bar{\nu}^{\sell}_{s}\diff N^{\gamma}_{s}=\sum_{j=1}^{\infty}\bar{\nu}^{\sell}_{T_{j}}1_{\{T_{j}\leq t\}},\ \text{for}\ t\geq0,
	\end{equation}
	and define $\overline{Y}_{t}\eqdef y-\bar{\xi}^{\sell}_{t}$ for all $t\geq0$. Note that $\bar{\xi}^{\sell}_{t}=\xi^{\sell}_{t}=0$, when $0\leq t<\kappa_{1}$. Then, $\overline{Y}_{t}=y-\bar{\xi}^{\sell}_{t}\geq0$ and
	\begin{equation*}
		\xi^{\sell}_{t}-\xi^{\barrier}_{t}=-\displaystyle\sum_{j=1}^{\infty}\nu^{\barrier}_{T_{j}}1_{\{T_{j}\leq t\}}\leq 0=\bar{\xi}^{\sell}_{t}.
	\end{equation*} 
	Let us take $t\in[\kappa_{1},\kappa_{2})$. If $\bar{\nu}^{\sell}_{\kappa_{1}}>0$, it means that the agent sold $\bar{\nu}^{\sell}_{\kappa_{1}}$ shares which are a fraction (or the total) of his $y$ initial  shares. Otherwise, he only sold a fraction (or the total) of $\displaystyle\sum_{k=1}^{\infty}\nu^{\barrier}_{T_{k}}1_{\{0<T_{k}\leq \kappa_{1}\}}$ and $\displaystyle\sum_{k=1}^{\infty}\nu^{\barrier}_{T_{k}}1_{\{0<T_{k}\leq \kappa_{1}\}}-\nu^{\sell}_{\kappa_{1}}$ are accumulated for the next occasion when the agent decides to sell.  Then,  $\overline{Y}_{t}=y-\bar{\nu}^{\sell}_{\kappa_{1}}\geq0$. On the other hand,
	\begin{equation*}
		\begin{split}
			\xi^{\sell}_{t}&=\nu^{s}_{\kappa_{1}}\geq\bigg(\nu^{\sell}_{\kappa_{1}}-\sum_{k=1}^{\infty}\nu_{T_{k}}^{\barrier}1_{\{0<T_{k}\leq\kappa_{1}\}}\bigg)^{+}=\bar{\nu}^{\sell}_{\kappa_{1}}= \bar{\xi}^{\sell}_{t},\\
			\xi^{\sell}_{t}-\xi^{\barrier}_{t}&=\nu^{\sell}_{\kappa_{1}}-\sum_{k=1}^{\infty}\nu_{T_{k}}^{\barrier}1_{\{0<T_{k}\leq\kappa_{1}\}}-\sum_{k=1}^{\infty}\nu_{T_{k}}^{\barrier}1_{\{\kappa_{1}<T_{k}\leq t\}}\leq \bar{\nu}_{\kappa_{1}}=\bar{\xi}^{\sell}_{t}.
		\end{split}
	\end{equation*}
	Let us take $t\in[\kappa_{2},\kappa_{3})$. If $\bar{\nu}^{\sell}_{\kappa_{2}}>0$, at the same way that the above case, we have that the agent sold $\bar{\nu}^{\sell}_{\kappa_{2}}$ shares which are a fraction (or the total) of $y-\bar{\nu}_{\kappa_{1}}^{\sell}$. Otherwise, he only sold a fraction (or the total) of $\displaystyle\sum_{k=1}^{\infty}\nu^{\barrier}_{T_{k}}1_{\{\kappa_{1}<T_{k}\leq \kappa_{2}\}}$, or, a fraction of $\displaystyle\sum_{k=1}^{\infty}\nu^{\barrier}_{T_{k}}1_{\{0<T_{k}\leq \kappa_{2}\}}-\nu^{\sell}_{\kappa_{1}}$. Then, $\displaystyle\sum_{k=1}^{\infty}\nu^{\barrier}_{T_{k}}1_{\{\kappa_{1}<T_{k}\leq \kappa_{2}\}}-\nu^{\sell}_{\kappa_{2}}$, or, $\displaystyle\sum_{k=1}^{\infty}\nu^{\barrier}_{T_{k}}1_{\{0<T_{k}\leq \kappa_{2}\}}-(\nu^{\sell}_{\kappa_{1}}+\nu^{\sell}_{\kappa_{2}})$ are accumulated for the next occasion when the agent decides to sell.  Then,   $\overline{Y}_{t}=y-(\bar{\nu}^{\sell}_{\kappa_{1}}+\bar{\nu}^{\sell}_{\kappa_{2}})\geq0$. On the other hand,
	\begin{equation*}
		\begin{split}
			\xi^{\sell}_{t}&=\nu^{\sell}_{\kappa_{1}}+\nu^{\sell}_{\kappa_{2}}\geq\bar{\nu}^{\sell}_{\kappa_{1}}+\bar{\nu}^{\sell}_{\kappa_{2}}= \bar{\xi}^{\sell}_{t},\\
			\xi^{\sell}_{t}-\xi^{\barrier}_{t}&=\nu^{\sell}_{\kappa_{1}}-\sum_{k=1}^{\infty}\nu_{T_{k}}^{\barrier}1_{\{0<T_{k}\leq\kappa_{1}\}}\\
			&\quad+\nu^{\sell}_{\kappa_{2}}-\sum_{k=1}^{\infty}\nu_{T_{k}}^{\barrier}1_{\{\kappa_{1}<T_{k}\leq\kappa_{2}\}}-\sum_{k=1}^{\infty}\nu_{T_{k}}^{\barrier}1_{\{\kappa_{2}<T_{k}\leq t\}}\leq \bar{\nu}_{\kappa_{1}}+\bar{\nu}_{\kappa_{2}}=\bar{\xi}^{\sell}_{t}.
		\end{split}
	\end{equation*}
	Recursively, we can see that if $\kappa_{i-1}\leq t<\kappa_{i}$, with $i\in\{3,4,\dots\}$, then, $\overline{Y}_{t}=y-\displaystyle\sum_{\iota=1}^{i-1}\bar{\nu}^{\sell}_{\kappa_{\iota}}\geq0$ and 
	\begin{equation*}
		\begin{split}
			\xi^{\sell}_{t}&=\sum_{\iota=1}^{i-1}\nu^{s}_{\kappa_{\iota}}\geq\sum_{\iota=1}^{i-1}\bar{\nu}^{\sell}_{\kappa_{\iota}}= \bar{\xi}^{\sell}_{t},\\
			\xi^{\sell}_{t}-\xi^{\barrier}_{t}&=\sum_{\iota=1}^{i-1}\bigg(\nu^{\sell}_{\kappa_{\iota}}-\sum_{k=1}^{\infty}\nu_{T_{k}}^{\barrier}1_{\{\kappa_{\iota-1}<T_{k}\leq\kappa_{\iota}\}}\bigg)-\sum_{k=1}^{\infty}\nu_{T_{k}}^{\barrier}1_{\{\kappa_{i-1}<T_{k}\leq t\}}\leq\sum_{\iota=1}^{i-1}\bar{\nu}^{\sell}_{\kappa_{\iota}}=\bar{\xi}^{\sell}_{t}.
		\end{split}
	\end{equation*}
	Therefore, by the previously seen, we conclude $\bar{\xi}^{\sell}\in\mathcal{A}^{\sell}(y)$ and $\xi^{\sell}_{t}-\xi^{\barrier}_{t}\leq\bar{\xi}^{\sell}_{t}\leq\xi^{\sell}_{t}$ for all $t\geq0$.
\end{proof}

\begin{proof}[Proof of Proposition \ref{Tcons1}. (Construction of \eqref{HJBg1})]
	The proof of this result shall be given in two parts. In the first part, by smooth fit arguments, it is constructed the function $v$ which is a solution to the HJB equation \eqref{HJBin2}. In the second part, we prove that $v$ is in $\hol^{2,1}(\R_{+}\times\R_{+})$. Let $x<F_{\gamma}$ and consider Eq. \eqref{HJB0}.  In this case, the only solution that remains bounded as $x\downarrow 0$ is given by
	\begin{equation}\label{solv1}
	v(x,y)=A_1(y)x^{n},
	\end{equation}
	where $n$ is the positive solution to \eqref{pos1}. In order to find the form of the function $A_1(y)$ that appears above, we study the behaviour  of the solution  $v(x,y)=A_1(y)x^{n}$ along the boundary $x=F_{\gamma}$. Now we look for a solution that is continuous differentiable at the boundary $x=F_{\gamma}$. Evaluating \eqref{solv1} on the   left hand side of the equality in \eqref{HJBin5}, and recalling that $\mathbb{Y}(x)=\dfrac{1}{\lambda}\ln(x/F_{\gamma})$, we obtain
	\begin{align*}
		\mathcal{L}_{\gamma}&v(x,y)+\gamma\left[v(F_{\gamma},y-\mathbb{Y}(x))+\dfrac{1}{\lambda}(1-\expo^{-\lambda \mathbb{Y}(x)})x-C_{\sell}\mathbb{Y}(x)\right]\\
		&=-\gamma A_1(y)x^{n}+\gamma A_1(y-\mathbb{Y}(x))F_{\gamma}^{n}+\dfrac{\gamma(x-F_{\gamma})}{\lambda}-\gamma C_{\sell}\mathbb{Y}(x)\defeq K(x,y).
	\end{align*}
	By differentiating  with respect to $x$, we get that
	\begin{align}\label{partK1}
		K_{x}(x,y)&=-\gamma nA_1(y)x^{n-1}-\gamma A_1'(y-\mathbb{Y}(x))\dfrac{F_{\gamma}^{n}}{\lambda x}+\dfrac{\gamma}{\lambda}-\gamma\dfrac{C_{\sell}}{\lambda x}.
	\end{align}
	In order for the solution to be continuously differentiable at the boundary, we take $x=F_{\gamma}$ in \eqref{partK1}, and note that
	\begin{align*}
		-\gamma nA_1(y)F_{\gamma}^{n-1}-\gamma A_1'(y)\dfrac{F_{\gamma}^{n}}{\lambda F_{\gamma}}+\dfrac{\gamma}{\lambda}-\gamma\dfrac{C_{\sell}}{\lambda F_{\gamma}}=0,
	\end{align*}
	where the equality follows since \eqref{HJBin5} holds in $x=F_{\gamma}$. The above equation is equivalent to the following ODE for $A_1$,
	\begin{align*}
		A'_1(y)F_{\gamma}^{n}=-\lambda nA_1(y)F_{\gamma}^{n}+F_{\gamma}-C_{\sell}.
	\end{align*}
	The solution of this equation  is given by
	\begin{equation*}
		A_1(y)=\dfrac{(F_{\gamma}-C_{\sell})}{\lambda n F_{\gamma}^{n}}(1-\expo^{-\lambda ny}),
	\end{equation*}
	which implies that, when $x<F_{\gamma}$, the solution to the HJB equation \eqref{HJBin2}   is given by
	\begin{equation}\label{u_r1}
	v(x,y)=\dfrac{(F_{\gamma}-C_{\sell})}{\lambda n F_{\gamma}^{n}}(1-\expo^{-\lambda ny})x^{n}.
	\end{equation}
	Now we look for the solution of the HJB equation within the region $F\leq x<F_{\gamma}\expo^{\lambda y}$. Since
	\begin{equation*}
		v(F_{\gamma}-, y-\mathbb{Y}(x))=\dfrac{(F_{\gamma}-C_{\sell})}{\lambda n }\left(1-\dfrac{x^{n}}{F_{\gamma}^{n}}\expo^{-\lambda ny}\right),
	\end{equation*}
	Eq. \eqref{HJBin5} is given by
	\begin{align}\label{HJBin2reg}
		\mathcal{L}_{\gamma}v(x,y)+\gamma\left[\dfrac{F_{\gamma}-C_{\sell}}{\lambda n }\left(1-\dfrac{x^{n}}{F_{\gamma}^{n}}\expo^{-\lambda ny}\right)+\dfrac{x-F_{\gamma}}{\lambda}-C_{\sell}\mathbb{Y}(x)\right]=0.
	\end{align}
	In order to find the solution to this equation, we look first at the following set of equations,
	\begin{align*}
		\mathcal{L}_{\gamma}v_1(x,y)+\gamma\dfrac{(F_{\gamma}-C_{\sell})}{\lambda n }\left(1-\dfrac{x^{n}}{F_{\gamma}^{n}}\expo^{-\lambda ny}\right)&=0,\\
		\mathcal{L}_{\gamma}v_2(x,y)+\gamma\left[\dfrac{(x-F_{\gamma})}{\lambda}-C_{\sell}\mathbb{Y}(x)\right]&=0.
	\end{align*}
	The solutions to the previous equations are given, respectively, by
	\begin{align*}
		v_1(x,y)&=-\dfrac{(F_{\gamma}-C_{\sell})}{\lambda nF_{\gamma}^{n}}\expo^{-\lambda ny}x^{n}+\dfrac{\gamma(F_{\gamma}-C_{\sell})}{\lambda n(\delta+\gamma)},\\
		v_2(x,y)&=\dfrac{\gamma}{\lambda(\eta+\gamma)}x-\dfrac{\gamma C_{\sell}}{\lambda(\delta+\gamma)}\ln x+C,
	\end{align*}
	with $\eta=\delta-\mu$ and $C\eqdef\displaystyle \dfrac{\gamma}{\lambda(\delta+\gamma)}\biggl(\dfrac{bC_{\sell}}{\delta+\gamma}+C_{\sell}\ln F_{\gamma}-F\biggr)$. Hence, the solution to \eqref{HJBin2reg} that remains bounded for large values of $\gamma$ is given by
	\begin{align*}
		v(x,y)&=A_{2}(y)x^{m_{\gamma}}-\dfrac{(F_{\gamma}-C_{\sell})}{\lambda nF_{\gamma}^{n}}\expo^{-\lambda ny}x^{n}+\dfrac{\gamma}{\lambda(\eta+\gamma)}x-\dfrac{\gamma C_{\sell}}{\lambda(\delta+\gamma)}\ln x+\dfrac{\gamma(F_{\gamma}-C_{\sell})}{\lambda n(\delta+\gamma)}+C,
	\end{align*}
	for some function $A_{2}:\mathbb{R}_+\longrightarrow\mathbb{R}$. Recall that $m_{\gamma}$ is the negative solution to \eqref{neg1}. Since $u$ satisfies $u(F_{\gamma}-,y)=u(F_{\gamma}+,y)$, we conclude that
	for each $F_{\gamma}\leq x<F_{\gamma}\expo^{\lambda y}$, the solution $u$ to the equation \eqref{HJBin2},  has the following expression 	\begin{align}\label{eq1}
		v(x,y)&=A_{\gamma}\biggl(\dfrac{x}{F_{\gamma}}\biggr)^{m_{\gamma}}-\dfrac{(F_{\gamma}-C_{\sell})}{\lambda nF_{\gamma}^{n}}\expo^{-\lambda ny}x^{n}\\
		&\quad+\dfrac{\gamma}{\lambda(\eta+\gamma)}x-\dfrac{\gamma C_{\sell}}{\lambda(\delta+\gamma)}\ln x+\dfrac{\gamma(F_{\gamma}-C_{\sell})}{\lambda n(\delta+\gamma)}+C,\notag
	\end{align}
	where $A_{\gamma}$ is as in \eqref{Cg1}. Finally, in order to obtain the value of the optimal barrier $F_{\gamma}$, look for a solution $v$ such that $v_{x}$ is continuous at $x=F_{\gamma}$. Since $v_{x}(F_{\gamma}-,y)=v_{x}(F_{\gamma}+,y)$, we get
	\begin{align*}
		F_{\gamma}-C_{\sell}=\dfrac{F_{\gamma}m_{\gamma}}{\delta+\gamma}\biggr(\dfrac{\delta}{n}-\dfrac{\gamma\mu}{\eta+\gamma}\biggl)-\dfrac{C_{\sell}m_{\gamma}}{\delta+\gamma}\biggl(\dfrac{\delta}{n}+\dfrac{\gamma b}{\delta+\gamma}\biggr)+\dfrac{\gamma F_{\gamma}}{\eta+\gamma}-\dfrac{\gamma C_{\sell}}{\delta+\gamma}.
	\end{align*}
	This implies that $F_{\gamma}$ is given as in \eqref{F1}. 
	
	Now, let us find a general solution to \eqref{HJBin2} for the region $x\geq F_{\gamma}\expo^{\lambda y}$. We have that a particular solution to \eqref{HJBin4} is given by
	\begin{equation*}
		\dfrac{\gamma}{\lambda(\eta+\gamma)}x(1-\expo^{-\lambda y})-\dfrac{\gamma}{\delta+\gamma}C_{\sell}y.
	\end{equation*}
	Then, the solution to the equation \eqref{HJBin4} that remains bounded for large values of $\gamma$  is  given by
	\begin{equation}\label{solHJB2}
	v(x,y)=A_{3}(y)x^{m_{\gamma}}+\dfrac{\gamma}{\lambda(\eta+\gamma)}x(1-\expo^{-\lambda y})-\dfrac{\gamma}{\delta+\gamma}C_{\sell}y,
	\end{equation}
	for some function $A_3(y):\mathbb{R}_+\longrightarrow\mathbb{R}$.
	Finally, to find the expressions for the function $A_{3}(y)$  involved in \eqref{solHJB2}, we ask that $v$ is continuous at $x=F_{\gamma}\expo^{\lambda y}$. Then, since $v(F_{\gamma}\expo^{\lambda y}-,y)=v(F_{\gamma}\expo^{\lambda y}+,y)$, it is not difficult to check that
	\begin{equation*}
		A_3(y)=\dfrac{A_{\gamma}(1-\expo^{-\lambda m_{\gamma}y})}{F_{\gamma}^{m_{\gamma}}}.
	\end{equation*}
	Therefore, for each $ x\geq F_{\gamma}\expo^{\lambda y}$, the solution $u$ has the following expression
	\begin{equation}\label{solHJB3}
	v(x,y)=\dfrac{A_{\gamma}(1-\expo^{-\lambda m_{\gamma}y})x^{m_{\gamma}}}{F_{\gamma}^{m_{\gamma}}}+\dfrac{\gamma(1-\expo^{-\lambda y})x}{\lambda(\eta+\gamma)}-\dfrac{\gamma C_{\sell}y}{\delta+\gamma}.
	\end{equation}
	From \eqref{u_r1}, \eqref{eq1}, \eqref{solHJB3} and since $v(x,0)=0$, we conclude that the solution $v$ to the HJB equation \eqref{HJBin2} is given by \eqref{HJBg1}.
\end{proof}

Now, we shall proceed to verify that $v$, given in \eqref{HJBg1}, belongs to $\hol^{2,1}(\mathbbm{R}^{+}\times\mathbbm{R}^{+})$.
\begin{proof}[Proof of Proposition \ref{Tcons1}. (Smooth of \eqref{HJBg1})]
	Note that by construction, it is sufficient to show that  $v$ is $\hol^{2,1}$ at $x=F_{\gamma}$ and $x=F_{\gamma}\expo^{\lambda y}$, respectively, since $v\in\hol^{2,1}((\mathbbm{R}^{+}\times\mathbbm{R}^{+})\setminus\mathcal{A})$, where
	\begin{equation*}
		\mathcal{A}\eqdef\{(x,y)\in\mathbbm{R}^{+}\times\mathbbm{R}^{+}: x=F_{\gamma}\ \text{or}\ y=F_{\gamma}\expo^{\lambda y} \}.
	\end{equation*}
	
	\noindent\textit{Smooth fit at the variable $y$.} Using \eqref{HJBg1}, it is easy to see that $v_{y}(F_{\gamma}-,y)=v_{y}(F_{\gamma}+,y)$. Implying that $v_{y}$ is continuous at $x=F_{\gamma}$. Calculating first derivative in \eqref{HJBg1}, it can be verified that
	\begin{equation}\label{fit1}
	\begin{cases}
	v_y(F_{\gamma}\expo^{\lambda y}-,y)=F_{\gamma}-C_{\sell},\\
	v_y(F_{\gamma}\expo^{\lambda y}+,y)=\lambda m_{\gamma}A_{\gamma}+\dfrac{\gamma F_{\gamma}}{\eta+\gamma}-\dfrac{\gamma C_{\sell}}{\delta+\gamma}.
	\end{cases}
	\end{equation}
	From \eqref{F1}--\eqref{Cg1}, it can be verified that
	\begin{align}\label{fit3}
		\lambda m_{\gamma}A_{\gamma}+\dfrac{\gamma F_{\gamma}}{\eta+\gamma}-\dfrac{\gamma C_{\sell}}{\delta+\gamma}-(F_{\gamma}-C_{\sell})=0.
	\end{align}
	Then, by \eqref{fit1}--\eqref{fit3}, it yields that $v_y(F_{\gamma}\expo^{\lambda y}-,y)=F_{\gamma}-C_{\sell}=v_y(F_{\gamma}\expo^{\lambda y}+,y)$. Therefore $v_{y}$ is continuous at $x=F_{\gamma}\expo^{\lambda y}$.
	
	\noindent\textit{Smooth fit at the variable $x$.} We will show that $v_{xx}$ is continuous on $\mathbbm{R}^{+}\times\mathbbm{R}^{+}$.
	We will first verify that $v_{x}$ is continuous at $x=F_{\gamma}\expo^{\lambda y}$. From \eqref{HJBg1}, it follows that
	\begin{equation*}
		v_x(x,y)=
		\begin{cases}
			\displaystyle\dfrac{A_{\gamma}m_{\gamma}x^{m_{\gamma}-1}}{F_{\gamma}^{m_{\gamma}}}-\dfrac{(F_{\gamma}-C_{\sell})\expo^{-\lambda ny}x^{n-1}}{\lambda F_{\gamma}^{n}}\\
			\hspace{4cm}+\dfrac{\gamma}{\lambda(\eta+\gamma)}-\dfrac{\gamma C_{\sell}}{\lambda(\gamma+\delta)x},&\text{if}\  F_{\gamma}\leq x<F_{\gamma}\expo^{\lambda y},\\
			\displaystyle\dfrac{A_{\gamma}m_{\gamma}(1-\expo^{-\lambda m_{\gamma}y})x^{m_{\gamma}-1}}{F_{\gamma}^{m_{\gamma}}}+\dfrac{\gamma(1-\expo^{-\lambda y})}{\lambda(\eta+\gamma)},& \text{if}\ x\geq F_{\gamma}\expo^{\lambda y}.
		\end{cases}
	\end{equation*}
	Then, using \eqref{fit3}, we get that
	\begin{align*}
		v_x(F_{\gamma}\expo^{\lambda y}+,y)&=\dfrac{A_{\gamma}m_{\gamma}(1-\expo^{-\lambda m_{\gamma}y})\expo^{\lambda y(m_{\gamma}-1)}}{F_{\gamma}}+\dfrac{\gamma(1-\expo^{-\lambda y})}{\lambda(\eta+\gamma)}\notag\\
		&=\dfrac{A_{\gamma}m_{\gamma}\expo^{\lambda y(m_{\gamma}-1)}}{F_{\gamma}}-\dfrac{\expo^{-\lambda y}}{\lambda F_{\gamma}}\biggr(\lambda A_{\gamma}m_{\gamma}+\dfrac{\gamma F_{\gamma}}{\eta+\gamma}\biggl)+\dfrac{\gamma}{\lambda(\eta+\gamma)}\\
		&=\dfrac{A_{\gamma}m_{\gamma}\expo^{\lambda y(m_{\gamma}-1)}}{F_{\gamma}}-\dfrac{(F_{\gamma}-C_{\sell})\expo^{-\lambda y}}{\lambda F_{\gamma}}+\dfrac{\gamma}{\lambda(\eta+\gamma)}-\dfrac{\gamma C_{\sell} \expo^{-\lambda y}}{\lambda F_{\gamma}(\gamma+\delta)}\\
		&=v_{x}(F_{\gamma}\expo^{\lambda y}-,y).
	\end{align*}
	We now show that  $v_{xx}$ is continuous on $\mathbbm{R}^{+}\times\mathbbm{R}^{+}$ using the fact that $v_{x}$ is continuous on $\mathbbm{R}^{+}\times\mathbbm{R}^{+}$.
	Since $v_{x}$ is continuous at $x=F_{\gamma}$, from \eqref{HJB0} and  \eqref{HJBin5}, we have
	\begin{align}\label{sec1}
		v_{xx}(F_{\gamma}-,y)=\dfrac{\delta v(F_{\gamma},y)-\mu v_{x}(F_{\gamma},y)}{\dfrac{1}{2}\sigma^{2}F_{\gamma}^{2}}=v_{xx}(F_{\gamma}+,y).
	\end{align}
	If $x=F_{\gamma}\expo^{\lambda y}$, using \eqref{HJBin5} and \eqref{HJBin4}, it follows that
	\begin{align}\label{sec2}
		v_{xx}(F_{\gamma}\expo^{\lambda y}-,y)&=\dfrac{1}{\sigma^{2}F_{\gamma}^{2}}\biggr[2\biggr((\delta+\gamma)
		v(F_{\gamma},y)\\
		&\quad-\mu F_{\gamma}v_{x}(F_{\gamma},y)+\gamma\biggr(C_{\sell}y-\dfrac{F_{\gamma}}{\lambda}(1-\expo^{-\lambda y})\biggl)\biggl)\biggl]\notag\\
		&=v_{xx}(F_{\gamma}\expo^{\lambda y}+,y).\notag
	\end{align}
	Hence \eqref{sec1} and \eqref{sec2}, we conclude that $v_{xx}$ is continuous on $\mathbbm{R}^{+}\times\mathbbm{R}^{+}$.
\end{proof}

\begin{proof}[Proof of Lemma \ref{ac1}]
	First, recall that $m_{\gamma}$ was defined as the negative solution of \eqref{neg1}, and  observe that $m_{\gamma}\underset{\gamma\rightarrow0}{\longrightarrow}m_{0}$, where $m_{0}$ is the negative solution to \eqref{pos1}. Letting $\gamma\rightarrow0$ in \eqref{a1}, it is easy to see that $a_{\gamma}\underset{ \gamma\rightarrow0}{\longrightarrow}1$. On the other hand, letting $\gamma\rightarrow\infty$ in \eqref{a1}, it can be verified that
	\begin{align}\label{a1.0}
		\lim_{\gamma\rightarrow\infty}a_{\gamma}&
		=\lim_{\gamma\rightarrow\infty}\dfrac{\displaystyle\dfrac{\delta}{m_{\gamma}}-\biggl(\dfrac{\delta}{n}+\dfrac{\gamma b}{\delta+\gamma}\biggr)}{\displaystyle\dfrac{\eta (\delta+\gamma)}{(\eta+\gamma)m_{\gamma}}-\biggr(\dfrac{\delta}{n}-\dfrac{\gamma\mu}{\eta+\gamma}\biggl)}=\dfrac{\delta+nb}{\delta -\mu n}.
	\end{align}
	Since $n$ is the positive solution to \eqref{pos1}, it yields that
	\begin{equation}\label{a1.1}
	\delta+nb=\dfrac{1}{2}\sigma^{2}n^{2},\quad \text{and}\quad \delta -\mu n=\dfrac{1}{2}\sigma^{2}n(n-1).
	\end{equation}
	Therefore, from \eqref{a1.0}, \eqref{a1.1}, we conclude that $a_{\gamma}\rightarrow\dfrac{n}{n-1}$, if $\gamma\rightarrow\infty$. Now we shall prove that $\displaystyle 1< a_{\gamma}< \frac{n}{n-1}$, for all $\gamma>0$. In order to prove this result we first note that by \eqref{a1.1}, 
	\begin{equation}\label{aux-4-new.1}
	\begin{cases}
	\dfrac{\delta}{n}+\dfrac{\gamma b}{\delta+\gamma}=\dfrac{\gamma\sigma^2 n^2+2\delta^2}{2n(\delta+\gamma)}>0,\\
	\dfrac{\delta}{n}-\dfrac{\gamma \mu}{\eta+\gamma}=\dfrac{\gamma\sigma^2 n(n-1)+2\delta\eta}{2n(\eta+\gamma)}>0.
	\end{cases}
	\end{equation} 
	On the other hand, we have for each $\gamma>0$,
	\begin{align}\label{aux-4-new.2}
		\frac{\delta}{\delta+\gamma}&-\frac{\eta}{\eta+\gamma}-\frac{\gamma m_{\gamma}}{\delta+\gamma}\left(\frac{ b}{\delta+\gamma}+\frac{\mu}{\eta+\gamma}\right)\\
		&=\frac{\gamma\mu}{(\delta+\gamma)(\eta+\gamma)}-\frac{\gamma m_{\gamma}}{(\delta+\gamma)}\left(\frac{(\eta+\gamma)\sigma^2+2\mu^2 }{2(\delta+\gamma)(\eta+\gamma)}\right)>0,\notag
	\end{align}
	since $m_{\gamma}<0$, with $\gamma>0$. \eqref{aux-4-new.1}, \eqref{aux-4-new.2} imply 
	\begin{align}\label{aux-4-new.3}
		0<\frac{\eta}{\eta+\gamma}-\frac{m_{\gamma}}{\delta+\gamma}\left(\frac{\delta }{n}-\frac{\gamma\mu}{\eta+\gamma}\right)<\frac{\delta}{\delta+\gamma}-\frac{m_{\gamma}}{\delta+\gamma}\left(\frac{\delta }{n}+\frac{\gamma b}{\delta+\gamma}\right).
	\end{align}
	Therefore using \eqref{a1},  \eqref{aux-4-new.3}, it follows that $1< a_{\gamma}$. In order to prove the remaining inequality we just note that using \eqref{a1} it is enough to show that
	\begin{align}
		\frac{n\eta}{(n-1)(\eta+\gamma)}-\frac{\delta}{\delta+\gamma}-\frac{m_{\gamma}}{\delta+\gamma}\left(\frac{\delta}{n-1}-\frac{\delta}{n}-\frac{\gamma b}{\delta+\gamma}-\frac{\gamma\mu n}{(n-1)(\eta+\gamma)}\right)>0.\label{aux-4-new}
	\end{align}
	Note that
	\begin{align}\label{aux-2-new}
		\frac{n\eta}{(n-1)(\eta+\gamma)}-\frac{\delta}{\delta+\gamma}=\frac{\gamma(\delta-n\mu)+\delta\mu}{(n-1)(\delta+\gamma)(\eta+\gamma)}>0,
	\end{align}
	since $\delta-n\mu>0$. Similarly using \eqref{a1.1} and the fact that $\eta=\delta-\mu$ we obtain that
	\begin{align}\label{aux-3-new}
		\frac{\delta}{n-1}-\frac{\delta}{n}-&\frac{\gamma b}{\delta+\gamma}-\frac{\gamma\mu n}{(n-1)(\eta+\gamma)}\\
		&=\frac{\delta}{n(n-1)}-\frac{\gamma b(n-1)(\eta+\gamma)+\gamma\mu n(\delta+\gamma)}{(n-1)(\delta+\gamma)(\eta+\gamma)}\notag\\
		&=\frac{\delta(\delta+\gamma)(\eta+\gamma)-n^2\gamma\mu(\delta+\gamma)-\gamma b n(n-1)(\eta+\gamma)}{n(n-1)(\delta+\gamma)(\eta+\gamma)}\notag\\
		&=\frac{\delta(\delta+\gamma)(\eta+\gamma)-\delta n\gamma\mu+\frac{1}{2}\sigma^2 n^2\gamma\mu(n-1)-\delta\gamma(\eta+\gamma)}{n(n-1)(\delta+\gamma)(\eta+\gamma)}\notag\\
		&=\frac{\delta^2\eta+\frac{\sigma^2}{2}n(n-1)(\delta+\gamma\mu)}{n(n-1)(\delta+\gamma)(\eta+\gamma)}>0.\notag
	\end{align}
	Therefore using \eqref{aux-2-new}, \eqref{aux-3-new}, and the fact that $m_{\gamma}<0$ we obtain that \eqref{aux-4-new} holds and hence $\displaystyle a_{\gamma}< \frac{n}{n-1}$.
\end{proof}
\begin{proof}[Proof of Proposition \ref{promax1}]
	In order to show that \eqref{c1} holds, it is enough to prove that
	\begin{equation*}
		\begin{cases}
			G_{l}(x,y,l;v)<0,& \text{for}\  x<F_{\gamma},\\
			G_l(x,y,\mathbb{Y}(x);v)=0\ \text{and}\  G_{ll}(x,y,\mathbb{Y}(x);v)<0,& \text{for}\ F_{\gamma}\leq x< F_{\gamma}\expo^{\lambda y},\\
			G_{l}(x,y,l;v)>0,& \text{for}\ x\geq F_{\gamma}\expo^{\lambda y},
		\end{cases}
	\end{equation*}
	where $\mathbb{Y}(x)=\dfrac{1}{\lambda}\ln(x/F_{\gamma})$.
	
	\textit{Maximum on the first zone.} Let $x<F_{\gamma}$. Taking first derivatives in \eqref{HJBg1}  and evaluating at the point $(x\expo^{-\lambda l},y-l)$, we get that
	\begin{align*}
		v_{x}(x\expo^{-\lambda l},y-l)&=\dfrac{F_{\gamma}-C_{\sell}}{\lambda F_{\gamma}^{n}}(\expo^{-\lambda nl}-\expo^{-\lambda ny})x^{n-1}\expo^{\lambda l},\\
		v_{y}(x\expo^{-\lambda l},y-l)&=\dfrac{F_{\gamma}-C_{\sell}}{F_{\gamma}^{n}}\expo^{-\lambda n y}x^{n}.
	\end{align*}
	Then,
	\begin{align*}
		G_{l}(x,y,l;v)&=-\lambda v_{x}(x\expo^{-\lambda l},y-l)x\expo^{-\lambda l}-v_{y}(x\expo^{-\lambda l},y-l)+\expo^{-\lambda l}x-C_{\sell}\\
		&=-\dfrac{F_{\gamma}-C_{\sell}}{F_{\gamma}^{n}}x^{n}\expo^{-\lambda nl}+\expo^{-\lambda l}x-C_{\sell}.
	\end{align*}
	Note that the above  expression is negative if and only if  \begin{equation}\label{c4}
	\dfrac{\expo^{-\lambda l}x-C_{\sell}}{(x\expo^{-\lambda l})^{n}}<\dfrac{F_{\gamma}-C_{\sell}}{F_{\gamma}^{n}}.
	\end{equation}
	Taking the first derivative with respect to $x$ on the left hand side of \eqref{c4}, we have
	\begin{equation*}
		\dfrac{\partial}{\partial x}\biggl(\dfrac{\expo^{-\lambda l}x-C_{\sell}}{(x\expo^{-\lambda l})^{n}}\biggr)=\dfrac{n-1}{x^{n+1}}\left(\dfrac{nC_{\sell}}{n-1}\expo^{\lambda l}-x\right)\expo^{\lambda l(n-1)}.
	\end{equation*}
	By \eqref{F1} and Lemma \ref{ac1}, we know that   $x<F_{\gamma}\expo^{\lambda l}<\dfrac{nC_{\sell}}{n-1}\expo^{\lambda l}$. Then, $\dfrac{\partial}{\partial x}\biggl(\dfrac{\expo^{-\lambda l}x-C_{\sell}}{(x\expo^{-\lambda l})^{n}}\biggr)>0$, which implies that  $\dfrac{\expo^{-\lambda l}x-C_{\sell}}{(x\expo^{-\lambda l})^{n}}$ is non-decreasing with respect to  $x$. It yields that
	\begin{equation*}
		\dfrac{\expo^{-\lambda l}x-C_{\sell}}{(x\expo^{-\lambda l})^{n}}<\dfrac{\expo^{-\lambda l}F_{\gamma}-C_{\sell}}{(F_{\gamma}\expo^{-\lambda l})^{n}},\ \text{for each}\  x<F_{\gamma}.
	\end{equation*}
	Showing  that  $\dfrac{\expo^{-\lambda l}F_{\gamma}-C_{\sell}}{(F_{\gamma}\expo^{-\lambda l})^{n}}<\dfrac{F_{\gamma}-C_{\sell}}{F_{\gamma}^{n}}$ we obtain \eqref{c4}, which is  equivalent to see that
	\begin{equation}\label{eq6}
	(\expo^{-\lambda l}-\expo^{-\lambda nl})a_{\gamma}<1-\expo^{-\lambda nl},
	\end{equation}
	since $F_{\gamma}=C_{\sell}a_{\gamma}$. Taking  $l^{*}\eqdef\dfrac{\ln n}{\lambda (n-1)}$, it can be verified that
	\begin{equation*}
		(\expo^{-\lambda l^{*}}-\expo^{-\lambda nl^{*}})a_{\gamma}=\max_{l}\{(\expo^{-\lambda l}-\expo^{-\lambda nl})a_{\gamma}\}.
	\end{equation*}
	Since $a_{\gamma}<\dfrac{n}{n-1}$ and $(n+1)^{n}<n^{n}(n+1)$, with $n>1$, we get that
	\begin{align}\label{eq8}
		(\expo^{-\lambda l^{*}}-\expo^{-\lambda nl^{*}})a_{\gamma}=(n^{-\frac{1}{n-1}}-n^{-\frac{n}{n-1}})a_{\gamma}<1-n^{-\frac{n}{n-1}}=1-\expo^{-\lambda nl^{*}}.
	\end{align}
	This means that \eqref{eq6}  is satisfied for any $l>l^{*}$.  Now, if $l\leq l^{*}$,  we shall prove the statement \eqref{eq6} by contradiction. Suppose that there exists $0\neq l_{1}\leq l^{*}$ such that
	\begin{align}
		(\expo^{-\lambda l_{1}}-\expo^{-\lambda n l_{1}})a_{\gamma}&=1-\expo^{-\lambda nl_{1}},\label{eq7}\\
		(\expo^{-\lambda l}-\expo^{-\lambda nl})a_{\gamma}&\geq1-\expo^{-\lambda nl},\ \text{for each}\ l\leq l_{1}\leq l^{*}.\notag
	\end{align}
	Since
	\begin{equation*}
		(\expo^{-\lambda l}-\expo^{-\lambda nl})a_{\gamma}\leq(\expo^{-\lambda l_{1}}-\expo^{-\lambda nl_{1}})a_{\gamma}\leq(\expo^{-\lambda l^{*}}-\expo^{-\lambda nl^{*}})a_{\gamma},
	\end{equation*}
	we have that $(\expo^{-\lambda l^{*}}-\expo^{-\lambda nl^{*}})a_{\gamma}\geq1-\expo^{-\lambda n l^{*}}$,  which is a contradiction with \eqref{eq8}. If $l^{*}< l_{1}$ and satisfies that \eqref{eq7}, we have that
	\begin{align*}
		1-\expo^{-\lambda l^{*}n}<1-\expo^{-\lambda l_{1}n}=(\expo^{-\lambda l_{1}}-\expo^{-\lambda n l_{1}})a_{\gamma}< (\expo^{-\lambda l^{*}}-\expo^{-\lambda nl^{*}})a_{\gamma},
	\end{align*}
	which contradicts   \eqref{eq8}. Therefore, \eqref{eq6} is true for any $l$ and it yields  \eqref{c4}. We conclude that the maximum on the right hand side of \eqref{c1} is achieved at $l=0$  when $x<F_{\gamma}$.
	
	\textit{ Maximum on the second zone.} Let $F_{\gamma}\leq x< F_{\gamma}\expo^{\lambda y}$. Taking first derivatives of $v$ and evaluating $(F_{\gamma},y-\mathbb{Y}(x))$ in them, it follows that
	\begin{align*}
		-\lambda F_{\gamma} v_{x}(F_{\gamma},y-\mathbb{Y}(x))&=-\dfrac{\delta m_{\gamma}(F_{\gamma}-C_{\sell})}{n(\delta+\gamma)}+(F_{\gamma}-C_{\sell})\expo^{-\lambda n(y-\mathbb{Y}(x))}\\
		&\quad-\dfrac{\gamma F_{\gamma}}{\eta+\gamma}\biggr(1-\dfrac{\mu m_{\gamma}}{\delta+\gamma}\biggl)+\dfrac{\gamma C_{\sell}}{\delta+\gamma}\biggr(1+\dfrac{b m_{\gamma}}{\delta+\gamma} \biggl),\\
		-v_{y}(F_{\gamma},y-\mathbb{Y}(x))&=-(F_{\gamma}-C_{\sell})\expo^{-\lambda n(y-\mathbb{Y}(x))}.
	\end{align*}
	Then, recalling that $F_{\gamma}=C_{\sell}a_{\gamma}$, where $a_{\gamma}$ is given in  \eqref{a1}, we get that
	\begin{align*}
		G_l(x,y,\mathbb{Y}(x);v)=a_{\gamma}C_{\sell}\biggr(\dfrac{\eta}{\eta+\gamma}-\dfrac{m_{\gamma}}{\delta+\gamma}\biggr(\dfrac{\delta}{n}-\dfrac{\gamma\mu}{\eta+\gamma}\biggl)\biggl)-\dfrac{C_{\sell}}{\delta+\gamma}\biggr(\delta-m_{\gamma}\biggl(\dfrac{\delta}{n}+\dfrac{b \gamma}{\delta+\gamma}\biggr)\biggl)=0.
	\end{align*}
	Therefore, $l=\mathbb{Y}(x)$ is a critical point of $G(x,y,l;v)$; recall that the definition of $G$ is given in \eqref{maxop1}. To verify that $l=\mathbb{Y}(x)$ is a maximum of $G(x,y,l;v)$, we need to see that 
	\begin{equation*}
		G_{ll}(x,y,\mathbb{Y}(x);v)<0. 
	\end{equation*}
	Firstly, note that
	\begin{equation}\label{eq11.0}
	\lambda^{2}F_{\gamma}v_{x}(F_{\gamma},y-\mathbb{Y}(x))=\lambda^{2}A_{\gamma}m_{\gamma}-\lambda(F_{\gamma}-C_{\sell})\expo^{-\lambda n(y-\mathbb{Y}(x))}+\dfrac{\lambda\gamma F_{\gamma}}{\eta+\gamma}-\dfrac{\lambda\gamma C_{\sell}}{\delta+\gamma}.
	\end{equation}
	Now, taking the second derivatives of $v$ and evaluating $(F_{\gamma},y-\mathbb{Y}(x))$ in them, it follows that
	\begin{equation}\label{eq11}
	\begin{cases}
	\lambda^{2}F_{\gamma}^{2}v_{xx}(F_{\gamma},y-\mathbb{Y}(x))=\lambda^{2}A_{\gamma}m_{\gamma}(m_{\gamma}-1)\\
	\hspace{4.5cm}-\lambda(F_{\gamma}-C_{\sell})(n-1)\expo^{-\lambda n(y-\mathbb{Y}(x))}+\dfrac{\lambda\gamma C_{\sell}}{\delta+\gamma},\\
	2\lambda F_{\gamma} v_{xy}(F_{\gamma},y-\mathbb{Y}(x))=2\lambda n(F_{\gamma}-C_{\sell})\expo^{-\lambda n(y-\mathbb{Y}(x))},\\
	v_{yy}(F_{\gamma},y-\mathbb{Y}(x))=-\lambda n(F_{\gamma}-C_{\sell})\expo^{-\lambda n(y-\mathbb{Y}(x))}.
	\end{cases}
	\end{equation}
	By \eqref{eq11.0}--\eqref{eq11}, we get  that
	\begin{align}\label{eq12}
		G_{ll}(x,y,\mathbb{Y}(x);v)&=\lambda^{2}F_{\gamma}^{2}v_{xx}(F_{\gamma},y-\mathbb{Y}(x))+2\lambda F_{\gamma} v_{xy}(F_{\gamma},y-\mathbb{Y}(x))\\
		&\quad+\lambda^{2}F_{\gamma}v_{x}(F_{\gamma},y-\mathbb{Y}(x))+v_{yy}(F_{\gamma},y-\mathbb{Y}(x))-\lambda F_{\gamma}\notag\\
		&=\dfrac{C_{\sell}\lambda m_{\gamma}^{2}}{\delta+\gamma}\biggl(\dfrac{\delta a_{\gamma}}{n}-\dfrac{a_{\gamma}\gamma\mu}{\eta+\gamma}-\dfrac{\delta}{n}-\dfrac{b\gamma}{\delta+\gamma}-\dfrac{a_{\gamma}\eta (\delta+\gamma)}{m_{\gamma}^{2}(\eta+\gamma)}\biggr).\notag
	\end{align}
	To see that the above  expression is negative,  we need only to prove that
	\begin{align}\label{eq14}
		a_{\gamma}\biggr(\dfrac{\delta}{n}-\dfrac{\gamma\mu}{\eta+\gamma}-\dfrac{\eta (\delta+\gamma)}{m_{\gamma}^{2}(\eta+\gamma)}\biggl)-\biggr(\dfrac{\delta}{n}+\dfrac{b\gamma}{\delta+\gamma}\biggl)<0,
	\end{align}
	which is equivalent to see
	\begin{equation}\label{eq14.2}
	a_{\gamma}\biggr(\dfrac{\delta}{n}-\dfrac{\mu\gamma}{\eta+\gamma}\biggl)-\biggr(\dfrac{\delta}{n}+\dfrac{b\gamma}{\delta+\gamma}\biggl)<0,
	\end{equation}
	since $-a_{\gamma}\dfrac{\eta (\delta+\gamma)}{(\eta+\gamma)m_{\gamma}^{2}}<0$. Verifying that
	\begin{equation}\label{eq14.1}
	\delta\biggr(\dfrac{\delta}{n}-\dfrac{\mu\gamma}{\eta+\gamma}\biggl)<\dfrac{\eta(\delta+\gamma)}{\eta+\gamma}\biggr(\dfrac{\delta}{n}+\dfrac{b\gamma}{\delta+\gamma}\biggl),
	\end{equation}
	and recalling that $a_{\gamma}$ is given by \eqref{a1}, it yields \eqref{eq14.2}. We shall show \eqref{eq14.1}. Observe that  
	\begin{equation}\label{eq14.1.0.3}
	\delta\biggr(\dfrac{\delta}{n}-\dfrac{\mu\gamma}{\eta+\gamma}\biggl),
	\end{equation}
	is non-increasing with respect to $\gamma>0$ and
	\begin{equation}\label{eq14.1.0.1}
	\begin{cases}
	\displaystyle\delta\biggr(\dfrac{\delta}{n}-\dfrac{\gamma\mu}{\eta+\gamma}\biggl)\uparrow\dfrac{\delta^{2}}{n},&\text{when}\ \gamma\rightarrow0,\\
	\displaystyle\delta\biggr(\dfrac{\delta}{n}-\dfrac{\gamma\mu}{\eta+\gamma}\biggl)\downarrow\delta\biggl(\dfrac{\delta}{n}-\mu\biggr),&\text{when}\ \gamma\rightarrow\infty.
	\end{cases}
	\end{equation}
	If $b>0$, i.e. $\dfrac{\sigma^{2}}{2}> \mu$, then
	\begin{equation}
	\dfrac{\eta(\delta+\gamma)}{\eta+\gamma}\biggr(\dfrac{\delta}{n}+\dfrac{b\gamma}{\delta+\gamma}\biggl),\label{eq14.1.0.0}
	\end{equation}
	is non-decreasing with respect to $\gamma>0$ and
	\begin{equation}\label{eq14.0.0.4}
	\begin{cases}
	\displaystyle\dfrac{\eta(\delta+\gamma)}{\eta+\gamma}\biggr(\dfrac{\delta}{n}+\dfrac{b\gamma}{\delta+\gamma}\biggl)\downarrow\dfrac{\delta^{2}}{n},&\text{when}\ \gamma\rightarrow0,\\
	\displaystyle\dfrac{\eta(\delta+\gamma)}{\eta+\gamma}\biggr(\dfrac{\delta}{n}+\dfrac{b\gamma}{\delta+\gamma}\biggl)\uparrow\eta\biggr(\dfrac{\delta}{n}+b\biggl),&\text{when}\ \gamma\rightarrow\infty,
	\end{cases}
	\end{equation}
	From here and by \eqref{eq14.1.0.1},  it follows \eqref{eq14.1} and therefore we have that \eqref{eq12} is negative. If $b\leq 0$, i.e. $\dfrac{\sigma^{2}}{2}\leq \mu$, it can be verified that \eqref{eq14.1.0.0} is non-increasing with respect to $\gamma>0$ and   
	\begin{equation}\label{eq14.1.0.2}
	\begin{cases}
	\displaystyle\dfrac{\eta(\delta+\gamma)}{\eta+\gamma}\biggr(\dfrac{\delta}{n}+\dfrac{b\gamma}{\delta+\gamma}\biggl)\uparrow\dfrac{\delta^{2}}{n},&\text{when}\ \gamma\rightarrow0,\\
	\displaystyle\dfrac{\eta(\delta+\gamma)}{\eta+\gamma}\biggr(\dfrac{\delta}{n}+\dfrac{b\gamma}{\delta+\gamma}\biggl)\downarrow\eta\biggr(\dfrac{\delta}{n}+b\biggl),&\text{when}\ \gamma\rightarrow\infty.
	\end{cases}
	\end{equation}
	Defining the function $h(\gamma)$ as
	\begin{equation}\label{eq14.0.0.8}
	h(\gamma)\eqdef \delta\biggr(\dfrac{\delta}{n}-\dfrac{\mu\gamma}{\eta+\gamma}\biggl)- \dfrac{\eta(\delta+\gamma)}{\eta+\gamma}\biggr(\dfrac{\delta}{n}+\dfrac{b\gamma}{\delta+\gamma}\biggl),
	\end{equation}
	we can see that 
	\begin{equation}
	\begin{cases}
	h(\gamma)\rightarrow 0,&\text{when}\ \gamma\rightarrow 0,\\
	h(\gamma)\rightarrow \delta\biggl(\dfrac{\delta}{n}-\mu\biggr)-\eta\biggr(\dfrac{\delta}{n}+b\biggl),&\text{when}\ \gamma\rightarrow \infty,
	\end{cases}
	\end{equation}
	since \eqref{eq14.1.0.1}, \eqref{eq14.1.0.2} hold. In order to show \eqref{eq14.1}, it is enough to prove that $h(\gamma)$ is non-increasing and 
	\begin{equation}\label{eq14.0.0.7}
	\delta\biggl(\dfrac{\delta}{n}-\mu\biggr)-\eta\biggr(\dfrac{\delta}{n}+b\biggl)<0.
	\end{equation}
	Since $n$ is the positive solution to \eqref{pos1} and is bigger than one, it follows that
	\begin{equation}\label{eq14.0.0.5}
	\frac{\delta}{\mu}=n+\frac{\sigma^2}{2\mu}n(n-1)>n,
	\end{equation}
	this yields that $n\mu<\delta$. Then, applying this in \eqref{eq14.0.0.5}, we have
	\begin{equation}\label{eq14.0.0.9}
	\dfrac{\delta\mu}{n}=\mu^2+\frac{\sigma^2\mu}{2}(n-1)<\frac{1}{2}\sigma^{2}\delta-\mu\biggl(\dfrac{1}{2}\sigma^{2}-\mu\biggr),
	\end{equation}
	which implies \eqref{eq14.0.0.7}. Now, taking first derivative in \eqref{eq14.0.0.8}, we see that
	\begin{equation}
	h'(\gamma)=\frac{\eta}{(\eta+\gamma)^2}\left(-\delta\mu+\frac{\delta\mu}{n}-b\eta\right).
	\end{equation}
	Using \eqref{eq14.0.0.9}, it can be shown  $-\delta\mu+\dfrac{\delta\mu}{n}-b\eta<0$. This implies that $h(\gamma)$ is a negative non-increasing function. Therefore, it is true \eqref{eq14.2} and we have that \eqref{eq12} is negative. Thus the maximum at the right hand side of \eqref{c1} is achieved at $l=\mathbb{Y}(x)$, when $ F_{\gamma}\leq x< F_{\gamma}\expo^{\lambda y}$.
	
	\textit{Maximum on the third zone.} Let $x\geq F_{\gamma}\expo^{\lambda y}$. Taking the first derivatives of $v$ and evaluating $(x\expo^{-\lambda l},y-l)$ in them, it follows that
	\begin{align*}
		-\lambda v_{x}(x\expo^{-\lambda l},y-l)x\expo^{-\lambda l}&=-\dfrac{\lambda m_{\gamma} A_{\gamma}x^{m_{\gamma}}(\expo^{-\lambda m_{\gamma} l}-\expo^{-\lambda m_{\gamma}y})}{F_{\gamma}^{m_{\gamma}}}-\dfrac{\gamma x(\expo^{-\lambda l}-\expo^{-\lambda y})}{\eta+\gamma},\\
		-v_{y}(x\expo^{-\lambda l},y-l)&=-\dfrac{\lambda m_{\gamma}A_{\gamma}(x\expo^{-\lambda y})^{m_{\gamma}}}{F^{m_{\gamma}}_{\gamma}}-\dfrac{\gamma x\expo^{-\lambda y}}{\eta+\gamma}+\dfrac{\gamma C_{\sell}}{\delta+\gamma}.
	\end{align*}
	Then,
	\begin{align*}
		G_{l}(x,y,l;v)&=-\lambda v_{x}(x\expo^{-\lambda l},y-l)x\expo^{-\lambda l}-v_{y}(x\expo^{-\lambda l},y-l)+\expo^{-\lambda l}x-C_{\sell}\\
		&=-\dfrac{\lambda m_{\gamma} A_{\gamma}(x\expo^{-\lambda l})^{m_{\gamma}}}{F_{\gamma}^{m_{\gamma}}}+\dfrac{\eta x\expo^{-\lambda l}}{\eta+\gamma}-\dfrac{\delta C_{\sell}}{\delta+\gamma}.
	\end{align*}
	To see that the above  expression is positive, is equivalent to show that
	\begin{align}\label{eq18}
		\dfrac{\eta x\expo^{-\lambda l}}{\eta+\gamma}- \dfrac{m_{\gamma}\lambda A_{\gamma}(x\expo^{-\lambda l})^{m_{\gamma}}}{F_{\gamma}^{m_{\gamma}}}>\dfrac{\delta C_{\sell}}{\delta+\gamma}.
	\end{align}
	Observe that from \eqref{Cg1} and \eqref{eq14.2}, it can be verified that $A_{\gamma}<0$. Then,  it follows that
	\begin{align*}
		\dfrac{\eta x\expo^{-\lambda l}}{\eta+\gamma}&\geq \dfrac{\eta F_{\gamma}\expo^{\lambda(y-l)}}{\eta+\gamma},\\
		- \dfrac{m_{\gamma}\lambda A_{\gamma}(x\expo^{-\lambda l})^{m_{\gamma}}}{F_{\gamma}^{m_{\gamma}}}&\geq - m_{\gamma}\lambda A_{\gamma}\expo^{\lambda m_{\gamma}(y-l)},
	\end{align*}
	since $x\geq F_{\gamma}\expo^{\lambda y}$  and $m_{\gamma}<0$. Then
	\begin{align}\label{eq18.0}
		\dfrac{\eta x\expo^{-\lambda l}}{\eta+\gamma}- \dfrac{m_{\gamma}\lambda A_{\gamma}(x\expo^{-\lambda l})^{m_{\gamma}}}{F_{\gamma}^{m_{\gamma}}}>\dfrac{\eta F_{\gamma}\expo^{\lambda(y-l)}}{\eta+\gamma}- m_{\gamma}\lambda A_{\gamma}\expo^{\lambda m_{\gamma}(y- l)}\defeq g(l).
	\end{align}
	Note that $g(l)$ is non-increasing with respect to $l$ and from \eqref{F1}--\eqref{Cg1}, we get that $g(y)=\dfrac{\delta C_{\sell}}{\delta+\gamma}$. Therefore, \eqref{eq18.0} yields that \eqref{eq18}. Thus, the maximum at the right hand side of \eqref{c1} is achieved at $l=y$, when $x\geq F_{\gamma}\expo^{\lambda y}$.
\end{proof}

\section*{Acknowledgements}
The research of D. Hern\'andez-Hern\'andez was partially supported by  CONACyT, under grant 254166.
H. A. Moreno-Franco acknowledges financial support from CIMAT, CONACyT and HSE. The last one  has been funded by the Russian Academic Excellence Project ``5-100''.

\end{document}